\newif\iflong
\newif\ifshort
\newcommand{\supp}{\operatorname{supp}}
\newcommand{\FPT}{\mathsf{FPT}}
\newcommand{\W}[1][]{
	\ifthenelse { \equal {#1} {} } 
    	{ \def\temp {\mathsf{W} } }   % if #1 == blank
    	{ \def\temp {\mathsf{W[#1]}} }   % else (not blank)
    	\temp
}
\newcommand{\NP}{\ensuremath \mathsf{NP}}
\renewcommand{\P}{\ensuremath \mathsf{P}}
\newcommand{\XP}{\ensuremath \mathsf{XP}}
\newcommand{\bigoh}{\mathcal{O}}
\newcommand{\fptpactime}{\mathsf{FPT}\text{-}\mathsf{PAC_{time}}}
\newcommand{\fptpac}{\mathsf{FPT}\text{-}\mathsf{PAC}}
\newcommand{\xppactime}{\mathsf{XP}\text{-}\mathsf{PAC_{time}}}
\newcommand{\xppac}{\mathsf{XP}\text{-}\mathsf{PAC}}
\newcommand{\consis}[2]{\ensuremath \textsc{{$#1$}-Consistency({$#2$})}}
\newcommand{\fpt}{\mathrm{fpt}}
\newcommand{\poly}{\mathrm{poly}}
\newcommand{\xp}{\mathrm{xp}}
\newtheorem{thm}{Theorem}[section]
\newtheorem{obs}[thm]{Observation}
\newtheorem{lem}[thm]{Lemma}
\newtheorem{prop}[thm]{Proposition}
\newtheorem{claim}[thm]{Claim}
\newenvironment{claimproof}{\iflong \begin{proof}[{\upshape{\noindent\underline{Proof}}}]}{\end{proof}\fi}
\theoremstyle{definition}
\newtheorem{defn}[thm]{Definition}
\theoremstyle{remark}
\newtheorem{rem}[thm]{Remark}
\newtheorem*{ex}{Example}
\numberwithin{equation}{section}
\newcommand{\N}{\ensuremath{\mathbb{N}}}
\newcommand{\Nat}{\N}
\begin{document}
\title{A Parameterized Theory of PAC Learning}
\author{
Cornelius Brand\textsuperscript{\rm 1},
Robert Ganian\textsuperscript{\rm 1},
Kirill Simonov\textsuperscript{\rm 2}}

\affiliations {
\textsuperscript{\rm 1} Algorithms and Complexity Group, TU Wien, Austria \\
\textsuperscript{\rm 2}Chair for Algorithm Engineering, Hasso Plattner Institute, Germany \\
\{cbrand, rganian\}@ac.tuwien.ac.at, kirill.simonov@hpi.de
}

\maketitle
\begin{abstract}
Probably Approximately Correct (i.e., PAC) learning is a core concept of sample complexity theory, and efficient PAC learnability is often seen as a natural counterpart to the class $\P$ in classical computational complexity. But while the nascent theory of parameterized complexity has allowed us to push beyond the $\P$-$\NP$ ``dichotomy'' in classical computational complexity and identify the exact boundaries of tractability for numerous problems, there is no analogue in the domain of sample complexity that could push beyond efficient PAC learnability.

As our core contribution, we fill this gap by developing a theory of parameterized PAC learning  which allows us to shed new light on several recent PAC learning results that incorporated elements of parameterized complexity. Within the theory, we identify not one but two notions of fixed-parameter learnability that both form distinct counterparts to the class $\FPT$---the core concept at the center of the parameterized complexity paradigm---and develop the machinery required to exclude fixed-parameter learnability. We then showcase the applications of this theory to identify refined boundaries of tractability for CNF and DNF learning as well as for a range of learning problems on graphs.
\end{abstract}

\section{Introduction}
While a number of different models for sample complexity have by now been considered in the literature, the fundamental concept of Probably Approximately Correct (i.e., PAC) learning~\cite{Valiant84} remains a core pillar which can theoretically capture and explain the success of learning algorithms in a broad range of contexts. Intuitively, in the classical proper PAC setting we ask whether it is possible to identify (or ``learn'') a hypothesis from a specified hypothesis space after drawing a certain number of labeled samples\footnote{Formal definitions are provided in the Preliminaries.} according to an unknown distribution.
As the best case scenario, one would wish to have an algorithm that can learn a hypothesis with arbitrarily high precision and with arbitrarily high probability after spending at most polynomial time and using at most polynomially-many samples. These are called \emph{efficient} PAC-learning algorithms, and the class of learning problems admitting such algorithms forms a natural counterpart to the class $\P$ in the classical complexity theory of computational problems. 

In spite of this parallel, our understanding of the sample complexity of learning problems remains significantly behind the vast amounts of knowledge we have by now gathered about the time complexity of computational problems. Indeed, while understanding the distinction between $\NP$-hard and polynomially tractable classes of instances for computational problems remains an important research direction, over the past two decades focus has gradually shifted towards a more fine-grained analysis of the boundaries of tractability for such problems. In particular, \emph{parameterized complexity}~\cite{CyganFKLMPPS15,DowneyF13} has emerged as a prominent paradigm that yields a much deeper understanding of the limits of tractability than classical complexity theory, and today we see research exploring the parameterized complexity of relevant problems appear in a broad range of venues spanning almost all areas of computer science. The main idea behind the parameterized paradigm is to investigate the complexity of problems not only with respect to the input size $n$, but also based on a numerical parameter $k$ that captures some property of the input; the central notion is then that of \emph{fixed-parameter tractability} (FPT), which means that the problem of interest can be solved in time $f(k)\cdot n^{\bigoh(1)}$ for some computable function $f$.

Given its ubiquitous applications, it is natural to ask whether (and how) the principles of parameterized complexity analysis can also be used to expand our understanding of the foundations of sample complexity beyond what may be possible through the lens of efficient PAC-learnability.
In fact, there already are a handful of papers~\cite{LiL18,ArvindKL09,AlekhnovichBFKP08,GartnerG07} 
that have hinted at the potential of this novel field by identifying learning algorithms that seem to act as intuitive counterparts to the fixed-parameter algorithms emblematic of parameterized complexity. However, the theoretical foundations of a general parameterized extension to the PAC-learning framework have yet to be formalized and developed, stifling further development of this unique bridge between the two research communities. 

\smallskip
\noindent \textbf{Contribution.}\quad
As our first conceptual contribution, we lay down the foundations of a general parameterized extension to the PAC-learning framework. Crucially, when defining the analogue of fixed-parameter tractability in the sample complexity setting, we show that there are two natural definitions which need to be considered: a parameterized learning problem is
\begin{enumerate}
\item \emph{$\fpt$-time learnable} if a suitable PAC hypothesis can be learned from polynomially-many samples using a fixed-parameter algorithm, and
\item \emph{$\fpt$-sample learnable} if a suitable PAC hypothesis can be learned using a fixed-parameter algorithm.
\end{enumerate}

Essentially, in the latter case the total number of samples used need not be polynomial, but of course remains upper-bounded by the (fixed-parameter) running time of the algorithm. An important feature of the framework is that parameters are allowed to depend not only on the sought-after concept (as was considered in prior work~\cite{ArvindKL09}), 
but also on properties of the distribution.
This can be seen as analogy to how one utilizes parameters in the classical parameterized complexity paradigm: they can capture the properties we require on the output (such as the size of a solution), but also restrictions we place on the input (such as structural parameters of an input graph).

The parameterized PAC-learning framework proposed in this paper also includes a machinery that can be used to obtain lower bounds which exclude the existence of FPT-time and $\fpt$-sample learning algorithms; this is done by building a bridge to the well-studied $\W$-hierarchy in parameterized complexity theory.
Moreover, we provide the sample-complexity analogues to the complexity class $\XP$; similarly as in the fixed-parameter case, it is necessary to distinguish whether we restrict only the running time or the number of samples.

After laying down the foundations, we turn our attention to two major problem settings with the aim of illustrating how these notions can expand our knowledge of sample complexity. First, we study the boundaries of learnability for the fundamental problem of learning DNF and CNF formulas, which are among the most classical questions in PAC learning theory. 

On one hand, $k$-CNF and $k$-DNF formulas are known to be efficiently PAC-learnable for every fixed $k$, albeit the running time of the known algorithms have the form $\bigoh(n^{k})$. As our first result, we show that while both of these learning problems are $\xp$-time learnable, under well-established complexity assumptions they are neither $\fpt$-time nor $\fpt$-sample learnable when parameterized by $k$.
On the other hand, for each $p>1$
 it is well-known that $p$-term DNF and $p$-clause CNF formulas are not efficiently PAC learnable~\cite{PittV88,AlekhnovichBFKP08} unless $\P=\NP$, and here we ask the question whether a suitable parameterization can be used to overcome this barrier to tractability. We show that while learning $p$-term DNF and $p$-clause CNF formulas remains intractable under the most natural parameterizations of the target concept, one can learn such formulas by an $\fpt$-time algorithm by exploiting a parameterization of the distribution. In particular, it is not difficult to observe that the learning problem is tractable when each sample contains at most one variable set to \texttt{True}, and we show that it is possible to lift this to a non-trivial $\fpt$-time learning algorithm when parameterized by the number of variables which are exempt from this restriction (i.e., can always contain any value).

While CNF and DNF learning is fundamental for PAC learning, in the last part of our study we turn towards a setting which has been extensively studied in both the sample complexity and parameterized complexity paradigms: graph problems. Indeed, fundamental graph problems such as vertex cover and subgraph detection served as a testbed for the development of parameterized complexity theory, and are also heavily studied in machine learning and sample complexity contexts~\cite{NguyenM20,AbasiN19,AbasiBM18,DamaschkeM10,AlonA05,ChoiK10}. Here, we obtain a meta-theorem showing that a large class of graph problems is $\fpt$-time learnable: in particular, we provide an $\fpt$-time learning algorithm for finding a vertex deletion set to $\mathcal{H}$, where $\mathcal{H}$ is an arbitrary graph class that can be characterized by a finite class of forbidden induced subgraphs. As one special case, this captures the problem of learning hidden vertex covers~\cite{DamaschkeM10}. We conclude by excluding a similar result for classes $\mathcal{H}$ characterized by forbidden minors---hence, for instance hidden feedback vertex sets are neither $\fpt$-time nor $\fpt$-sample learnable when parameterized by their size.

\ifshort
\smallskip
\noindent \emph{Due to space constraints, proof details are provided in the supplementary material.}
\fi

\section{Preliminaries}
As basic notation and terminology, we set $\{0,1\}^\ast = \bigcup_{m\in\N} \{0,1\}^m$.
A \emph{distribution} on $\{0,1\}^n$ is a mapping $\mathcal{D}: \{0,1\}^n \rightarrow [0,1]$ such that $\sum_{x \in \{0,1\}^n} \mathcal{D}(x) = 1$, and the \emph{support} of $\mathcal{D}$ is the set $\supp\mathcal{D} = \{ x \mid \mathcal{D}(x) > 0\}$.
We first recall the basics of both PAC-learning and parameterized complexity theory.
\subsection{PAC-Learning}
Let us begin by formalizing the classical learning theory considered in this article~\cite{Valiant84,MLbook}.

\begin{defn}
A \emph{concept} is an arbitrary Boolean function $c: \{0,1\}^n \rightarrow \{0,1\}$.
An assignment $x\in \{0,1\}^n$ is called a \emph{positive sample} for $c$ if $c(x) = 1$, and a \emph{negative sample} otherwise. A \emph{concept class} $\mathcal C$ is a set of concepts.
For every $m\in \Nat$, we write $\mathcal C_m = \mathcal C \cap \mathcal{B}_m$, where $\mathcal{B}_m$ is the set of all $m$-ary Boolean functions.
\end{defn}

\begin{defn}
Let $\mathcal{C}$ be a concept class. A surjective mapping $\rho: \{0,1\}^\ast \rightarrow \mathcal{C}$ is called a \emph{representation scheme} of $\mathcal{C}$. 
We call each $r$ with $\rho(r) = c$ a \emph{representation} of concept $c$.
\end{defn}
In plain terms, while concepts may be arbitrary functions, representations are what make these ``tangible'' and are what we typically expect as the output of learning algorithms.

\begin{defn}
A \emph{learning problem} is a pair $(\mathcal{C}, \rho)$, where $\mathcal{C}$ is a concept class and $\rho$ is a representation scheme for $\mathcal{C}$.
\end{defn}

\begin{defn} \label{def:learning-alg}
A \emph{learning algorithm} for a learning problem $(\mathcal{C},\rho)$ is a randomized algorithm such that:
\begin{enumerate}
\item It obtains the values $n,\varepsilon,\delta$ as inputs, where $n$ is an integer and $0< \varepsilon,\delta \leq 1$ are rational numbers.
\item It has access to a hidden representation $r^\ast$ of some concept $c^\ast = \rho(r^\ast)$ and a hidden distribution $\mathcal{D}_n$ on $\{0,1\}^n$ through an oracle that returns \emph{labeled samples} $(x,c^\ast(x))$, where $x \in \{0,1\}^n$ is drawn at random from $\mathcal{D}_n$.
\item The output of the algorithm is a representation of some concept, called its \emph{hypothesis}.
\end{enumerate}
\end{defn}

\begin{rem}
Clearly, the algorithm can infer the value of $n$ from the samples that the oracle returns. Still, $n$ is included in the input for the sake of being explicit. We use $s=|r^\ast|$ to denote the size of the hidden representation.
\end{rem}

\begin{defn}
Let $\mathcal{A}$ be a learning algorithm. Fix a hidden hypothesis $c^\ast$ and a distribution on $\{0,1\}^n$.
Let $h$ be a hypothesis output by $\mathcal{A}$ and $c = \rho(h)$ be the concept $h$ represents.
We define 
\[
\operatorname{err}_h = \mathbb{P}_{x \sim \mathcal{D}_n}(c(x) \neq c^\ast(x))
\]
as the probability of the hypothesis and the hidden concept disagreeing on a sample drawn from $\mathcal{D}_n$,  the so-called \emph{generalization error} of $h$ under $\mathcal{D}_n$.

The algorithm $\mathcal{A}$ is called \emph{probably approximately correct (PAC)} if it outputs a hypo\-thesis $h$ such that $\operatorname{err}_h \leq \varepsilon$ with probability at least $1-\delta.$ 
\end{defn}

Usually, learning problems in this framework are regarded as tractable if they are PAC-learnable within polynomial time bounds. More precisely, we say that a learning problem $L$ is \emph{efficiently} PAC-learnable if there is a PAC algorithm for $L$ that runs in time polynomial in $n,s,1/\varepsilon$ and $1/\delta$.

\subsection{Parameterized Complexity}
To extend the above concepts from learning theory to the parameterized setting,
we now introduce the parts of parameterized complexity theory that will become important later.
We follow the excellent exposition of the textbook by Cygan et al.~(\citeyear{CyganFKLMPPS15}) and define the following central notions:
\begin{defn}
A \emph{parameterized problem} is a language $L \subseteq \{0,1\}^\ast \times \mathbb{N}$. For a pair $(x,k) \in \{0,1\}^\ast \times \mathbb{N}$, $k$ is called the \emph{parameter} of the instance $(x,k)$. The encoding length of $(x,k)$ is called the \emph{size} of the instance.
\end{defn}

\begin{rem}
The definition of parameterized decision problems can easily be extended to search problems, where we are additionally required to output a witness in case our algorithm outputs ``yes.''
\end{rem}

\begin{defn}
A parameterized problem $L$ is \emph{fixed-parameter tractable} if there exists an algorithm $\mathcal{A}$, a computable non-decreasing function $f: \mathbb{N} \rightarrow \mathbb{N}$ and a polynomially bounded non-decreasing function $p(\cdot,\cdot): \mathbb{N}^2 \rightarrow \mathbb{N}$ such that $\mathcal{A}$ correctly decides for an input $(x,k)$ whether or not $(x,k) \in L$ holds in time $f(k)\cdot p(n,k)$.
We denote the class of fixed-parameter tractable problems by $\FPT$.
\end{defn}

\begin{rem}
While fixed-parameter tractability lies at the heart of parameterized complexity theory, the class $\XP$ defined below captures a weaker notion of tractability that is still desirable for parameterized problems which are not believed to be in $\FPT$.
\end{rem}

\begin{defn}
A parameterized problem $L$ is in the complexity class $\XP$ if there exists an algorithm $\mathcal{A}$, a computable non-decreasing function $f: \mathbb{N} \rightarrow \mathbb{N}$ and a polynomially bounded non-decreasing function $p(\cdot,\cdot): \mathbb{N}^2 \rightarrow \mathbb{N}$ such that $\mathcal{A}$ correctly decides for an input $(x,k)$ whether or not $(x,k) \in L$ holds in time $p(n,k)^{f(k)}$.
\end{defn}
\begin{rem}
Less formally, $\XP$ is the class of problems that are solvable in polynomial time for a constant parameter value.
The difference to $\FPT$ is that for $\XP$, we allow this polynomial (and in particular, its degree) to depend on $k$,
while it is fixed for all $k$ in the definition of $\FPT$.
\end{rem}

\ifshort
We also assume basic familiarity with the complexity classes $\W[1]$ and $\W[2]$~\cite{DowneyF13}. It is a well-established conjecture that these are strictly larger than $\FPT$, and hence establishing $\W[1]$- or $\W[2]$-hardness for a problem essentially rules out its fixed-parameter tractability.
\fi

\iflong
To prove our lower bounds, we will to recall the basics of parameterized complexity theory,
beginning with the usual notion for parameterized reductions.
\begin{defn}
Let $L,L'$ be two parameterized problems.
We say that an algorithm $\mathcal{A}$ that takes an instance $(x,k)$ of $L$ and outputs an instance $(x',k')$ of $L'$ is a \emph{parameterized reduction from $L$ to $L'$} if
\begin{enumerate}
\item $(x,k) \in L$ if and only if $(x',k') \in L'$,
\item $k' \leq g(k)$ for some function $g: \N \rightarrow \N$, and
\item the running time of the algorithm is upper-bounded by $f(k)\cdot n^{\bigoh(1)}$ for some computable function $f: \N \rightarrow \N$.
\end{enumerate}
\end{defn}
Using this definition, we can give an ad-hoc definition of the parameterized complexity classes $\W[1]$ and $\W[2]$,
which form rough analogues to the usual class $\NP$. While we will only use $\W[2]$ in this paper,
we will give the definition for $\W[1]$ for completeness.

Let \textsc{$k$-Clique} be the problem of deciding, given a graph $G$ and a parameter $k$, if $G$ has a clique of size at least $k$.
Furthermore, we let \textsc{$k$-Hitting Set} be the problem of deciding, given a pair $(U,\mathcal{F})$ and a parameter $k$, where $U$ is some universe of the form $U = \{1,\ldots,n\}$ and $\mathcal{F}$ is a set family $\{F_i\}_{i=1}^m$ over $U$, if there is a set $H$ of size at most $k$ such that $F_i \cap H \neq \emptyset$ for all $i$.
\begin{defn}
The class $\W[1]$ contains all parameterized problems $L$ such that there is a parameterized reduction from $L$ to \textsc{$k$-Clique}. The class $\W[2]$ contains all parameterized problems $L$ such that there is a parameterized reduction from $L$ to \textsc{$k$-Hitting Set}.
Furthermore, we call a problem $L'$ \emph{$\W[1]$-hard} (resp. \emph{$\W[2]$-hard}) if there is a parameterized reduction from \textsc{$k$-Clique} (resp. \textsc{$k$-Hitting Set}) to $L'$.
\end{defn}

Establishing $\W[1]$- or $\W[2]$-hardness for a problem rules out its fixed-parameter tractability under well-established complexity assumptions~\cite{DowneyF13}.
\fi

\section{Parameterized PAC-Learning}
\label{sec:fptpac}

As the first step towards defining a theory of parameterized PAC learning, we need to consider the parameters that will be used. In the computational setting we associated a parameter with each instance, but in the learning setting this notion does not exist---instead, a learning algorithm needs to deal with a hidden concept representation and a hidden distribution, and we allow both of these to be tied to parameterizations.

\begin{defn}[Parameterization of Representations]
Let $\{\mathcal{R}_k\}_{k\in \N}$ with $\mathcal{R}_k \subseteq \{0,1\}^\ast$ be a mapping assigning a set of representations to every natural number $k$ such that for every $r \in \{0,1\}^\ast$, there is some $k$ such that $r \in \mathcal{R}_k$, that is, $\bigcup_k \mathcal{R}_k = \{0,1\}^\ast$.
We call $\{\mathcal{R}_k\}_{k\in\N}$ a \emph{parameterization of representations}.
Given a parameterization of representations $\{\mathcal{R}_k\}_{k\in \N}$, we associate a value $\kappa_{\mathcal{R}}(r)$ to single representations $r \in \{0,1\}^\ast$ by defining
\[
\kappa_{\mathcal{R}}(r) = \min \{k\ :\ r \in \mathcal{R}_k\}.
\]
\end{defn}

\begin{rem}
In line with the usual notion of parameterizations, we will assume $\kappa_{\mathcal{R}}$ to be a computable function.
\end{rem}

\begin{ex}
Let $\rho$ be the representation scheme taking a (binary representation of a) $k$-term DNF formula to its underlying Boolean function.
If we are interested in learning $k$-term DNFs, we probably want to consider a $k'$-term DNF a $k$-term DNF for $k' \leq k$, too. Therefore, we let $\mathcal{R}_k$ be the set of all $k'$-term DNFs for $k' \leq k$. The associated mapping $\kappa_{\mathcal{R}}$ then maps every $k$-term DNF to the value of $k$.
\end{ex}

\begin{defn}[Parameterization of Distributions and Samples]
Let $\lambda$ be a mapping assigning a natural number to every distribution on $\{0,1\}^n$ for each $n$, such that for every two distributions $\mathcal{D},\mathcal{D'}$ on $\{0,1\}^n$, if $\supp \mathcal{D} \subseteq \supp \mathcal{D'}$, then $\lambda(\mathcal{D}) \leq \lambda(\mathcal{D'})$.
In this case, we call $\lambda$ a \emph{parameterization of distributions}.

We extend every parameterization of distributions $\lambda$ to subsets $X \subseteq \{0,1\}^n$ by defining its corresponding \emph{parameterization of sample sets} via: 
\begin{align} \label{eq:param-sample}
\lambda(X) = \min_{\mathcal{D}\ :\ X\ =\ \supp \mathcal{D}} \lambda(\mathcal{D}).
\end{align}
\end{defn}

\begin{rem}
Note that the definition implies that, equivalently, $\lambda(X) = \min_{\mathcal{D}\ :\ X\ \subseteq \ \supp \mathcal{D}} \lambda(\mathcal{D})$; in other words, $\lambda(X)$ is the lowest parameter that can be obtained from a distribution that has $X$ as its support.
\end{rem}

All parameterizations considered in our exposition depend solely on the support; however, we do want to explicitly also allow more expressive parameterizations that depend on the distribution. To build a theory allowing such parameterizations, it is necessary to impose an additional technical condition which ensures that the distributions minimizing the parameter values for $\lambda$ are ``well-behaved''. 

\begin{defn}
We say that a distribution $\mathcal{D}^\ast$ is called \emph{typical for $X$ under $\lambda$} if 
$\mathcal{D}^\ast$ attains the minimum in Eq. \eqref{eq:param-sample}, that is, $\lambda(X) = \lambda(\mathcal{D}^\ast)$ and $\supp \mathcal{D}^\ast = X$.
\end{defn}

\begin{defn}[$L$-Sampleable Parameterizations]\label{def:sampleabe}
Let $\mathcal{A}$ be a randomized algorithm that receives input $X \subseteq \{0,1\}^n$ and outputs a concept $c$ in time $L(n,|X|,\lambda(X))$ for some non-decreasing function $L(\cdot,\cdot,\cdot)$. We say that $\lambda$ is \emph{$L$-sampleable} if the following holds true for the random variable $C$ that corresponds to the output of $\mathcal{A}$ over all random bits used by $\mathcal{A}$ to output $c$: There is some distribution $\mathcal{D}^\ast$ that is typical for $X$ under $\lambda$ such that $C \sim \mathcal{D}^\ast$ holds, i.e., $C$ has the same distribution as $\mathcal{D}^\ast$.
\end{defn}

\begin{rem}
Every parameterization $\lambda$ that depends only on the support $X$ is linear-time-sampleable, since the uniform distribution on $X$ will be typical for $X$ under $\lambda$. 
More generally, we believe that every ``reasonable'' parameterization is polynomial-time-sampleable.
\end{rem}
\begin{ex}
For instance, a valid choice of $\lambda$ is the mapping that selects the largest number $k$ such that there is a concept $x$ in the support of $\mathcal{D}$ that has $k$ non-zero entries.
This parameter can obviously be computed on sets of concepts, as demanded by the condition, and is linear-time sampleable  (in the size of the input, $n\cdot t$).
\end{ex}

\begin{defn}[Parameterized Learning Problems]
A \emph{parameterized learning problem} is a learning problem $(\mathcal{C},\rho)$ together with a pair $(\{\mathcal{R}_k\}_{k\in\N}, \lambda)$, called its \emph{parameters}, where $\{\mathcal{R}_k\}_{k\in\N}$ is a parameterization of representations and $\lambda$ is a parameterization of distributions. 

\end{defn}

At this point, a note on the relation between the definitions presented above and the formalism developed by \citeauthor{ArvindKL09} (\citeyear{ArvindKL09}) is in order. 
While the latter is capable of expressing learning problems such as $k$-juntas, where $k$ is the parameter, it cannot account for properties of the sample space. For example, one might be interested in learning Boolean formulas from samples that have a limited number of $k$ variables set to true. Our framework captures this by including the distribution into the parameterization. 
This can be likened on a conceptual level to the difference that exists in ordinary parameterized complexity theory between parameterizations by solution properties versus parameterizing by properties of the input instance.

\begin{defn}[Parameterized Learning Algorithm]
A \emph{parameterized learning algorithm} for a parameterized learning problem $(\mathcal{C},\rho,\{\mathcal{R}_k\}_{k\in\N},\lambda)$ is a learning algorithm for $(\mathcal{C},\rho)$ in the sense of Definition \ref{def:learning-alg}.
In addition to $n, \varepsilon, \delta$, a parameterized learning algorithm obtains two inputs $k$ and $\ell$, which are promised to satisfy $k = \kappa_{\mathcal{R}}(r^\ast)$ as well as $\ell= \lambda(\mathcal{D}_n)$,
and the algorithm is required to always output a hypothesis $h$ satisfying $\kappa(h) \in \mathcal{R}_k$.
\end{defn}
\begin{rem}
By requiring the hidden hypothesis and the output hypothesis to adhere to the same representation scheme, we limit ourselves to the setting of \emph{proper} learning. In principle, nothing speaks against extending our framework also to the improper case, as is done in \citeauthor{ArvindKL09} (\citeyear{ArvindKL09}) for their formalization.
Since all our examples speak about proper learning, we restrict our definitions to this case.
\end{rem}

\begin{rem}
As readers acquainted with parameterized complexity theory may find noteworthy, parameterized learning problems as defined here depend on two parameters, as opposed to a single parameter.
For parameterized algorithms, it is customary to combine multiple parameters $k,\ell$ into one via defining a new parameter such as $k + \ell$ or $\max\{k,\ell\}$. Indeed, this leads to less heavy notation, while sacrificing nothing in terms of expressive power of the obtained theory. 

We shall see towards the end of this section that for the purposes of this article, it is more convenient to make do with two separate parameters, in order to establish a meaningful link between ordinary parameterized algorithms and parameterized learning algorithms. 
\end{rem}

Let $\poly(\cdot)$ denote the set of functions that can be bounded by non-decreasing polynomial functions in their arguments.
Furthermore, $\fpt(x_1,\ldots,x_t;k_1,\ldots,k_t)$ and $\xp(x_1,\ldots,x_t;k_1,\ldots,k_t)$ denote those functions bounded by $f(k_1,\ldots,k_t)\cdot p(x_1,\ldots,x_t)$ and $p(x_1,\ldots,x_t)^{f(k_1,\ldots,k_t)}$, respectively, for any non-decreasing computable function $f$ in $k_1,\ldots,k_t$ and $p \in \poly(x_1,\ldots,x_t)$.

\begin{defn}[$(T,S)$-PAC Learnability] \label{def:learnable}
Let $T(n,s,1/\varepsilon,1/\delta,k,\ell),S(n,s,1/\varepsilon,1/\delta,k,\ell)$ be any two functions taking on integer values, and non-decreasing in all of their arguments.

A parameterized learning problem $\mathcal{L} = (\mathcal{C},\rho, \{\mathcal{R}_k\}_{k\in\N},\lambda)$ is \emph{$(T,S)$-PAC learnable} if there is a PAC learning algorithm for $\mathcal{L}$ that runs in time $\bigoh(T(n,s,1/\varepsilon,1/\delta,k,\ell))$ and queries the oracle at most $\bigoh(S(n,s,1/\varepsilon,1/\delta,k,\ell))$ times. 

We denote the set of parameterized learning problems that are $(T,S)$-PAC learnable by $\mathsf{PAC}[T,S]$. This is extended to sets of functions $\mathbf{S},\mathbf{T}$ through setting $\mathsf{PAC}[T,S] = \bigcup_{S \in \mathbf{S},\\T \in \mathbf{T}} \mathsf{PAC}[T,S]$.
\end{defn}

\begin{defn} \label{def:classes}
We define the following complexity classes:
\begin{align}
\fptpactime &= \mathsf{PAC}[\mathrm{fpt},\mathrm{poly}], \\
\fptpac &= \mathsf{PAC}[\mathrm{fpt},\mathrm{fpt}], \\
\xppactime &= \mathsf{PAC}[\mathrm{xp},\mathrm{poly}], \\
\xppac &= \mathsf{PAC}[\mathrm{xp},\mathrm{xp}],
\end{align}
where we fixed
\begin{align*}
\poly &= \poly(n,s,1/\varepsilon,1/\delta,k,\ell), \\
\fpt &= \fpt(n,s,1/\varepsilon,1/\delta;k,\ell), \\
\xp &= \xp(n,s,1/\varepsilon,1/\delta;k,\ell).
\end{align*}
\end{defn}
\begin{rem}
In addition to the complexity classes just defined, there is a fifth class that may be considered here: $\mathsf{PAC}[\mathrm{xp},\mathrm{fpt}].$  However, this class does not play a role in any of the examples presented in this work, and hence we leave its exploration to future works.
\end{rem}
Figure \ref{fig:classes} provides an overview of these complexity classes and their relationships. As does $\XP$, the class $\xppac$ in the learning setting contains precisely those parameterized learning problems which become efficiently PAC-learnable whenever the parameters are fixed to an arbitrary constant.
\begin{ex}
A problem that fits into the class $\xppac$ is learning of $k$-CNF and $k$-DNF formulas,
parameterized by $k$.
On the other hand, the upcoming Theorems \ref{thm:few_columns} and~\ref{thm:induced_subgraphs} furnish the class $\fptpactime$.
An example for the class $\fptpac$ in spirit can be found in \citeauthor{LiL18} (\citeyear{LiL18}). Finally, an example of a learning problem in $\xppactime$ is provided in Observation~\ref{obs:xpgraphs}.
\end{ex}
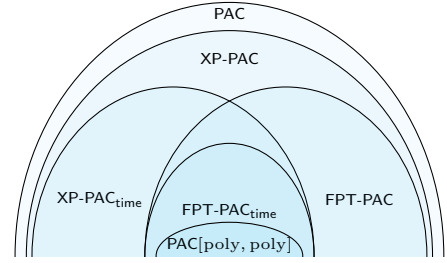
\begin{figure}
\begin{center}

\begin{tikzpicture}[set/.style={fill=cyan,fill opacity=0.04},scale=0.75]
\clip (-4.3,0) rectangle (5,5); 
\draw[set] (0,0) ellipse (3.8cm and 4.5cm);
\node at (0,4.3) {\tiny $\mathsf{PAC}$};

\draw[set] (0,0) ellipse (3.6cm and 4cm);
\node at (0,3.5) {\tiny $\xppac$};
 
\node at (-2.3,1) {\tiny $\xppactime$};
\draw[set] (-1,0) ellipse (2.5cm and 3cm);

\node at (2.3,1) {\tiny $\fptpac$};
\draw[set] (1,0) ellipse (2.5cm and 3cm);
 
\node at (0,0.8) {\tiny $\fptpactime$};
\draw[set] (0,0) ellipse (1.5cm and 2cm);

\node at (0,0.2) {\tiny $\mathsf{PAC}[\poly,\poly]$};
\draw[set] (0,0) ellipse (1.3cm and 0.6cm);
\end{tikzpicture}
\end{center}
\caption{A schematic view of the parameterized learning classes defined in Definition \ref{def:classes}.}
\label{fig:classes} 
\end{figure}

\subsection{Consistency Checking: A Link between Theories}
With the basic complexity classes in place, we turn to establishing links between the newly developed theory and the by now well-established parameterized complexity paradigm for decision problems. Crucially, these links will allow us to exploit algorithmic upper and lower bounds from the latter also in the learning setting.

\begin{defn}[Parameterized Consistency Checking] \label{def:decision-from-learning}
With every parameterized learning problem $\mathcal{L} = (\mathcal{C},\rho,\{\mathcal{R}_k\}_{k\in\N},\lambda)$ and every function $f(n,t,k,\ell)$ we associate a parameterized search problem $\consis{f}{\mathcal{L}}$
as follows:

\begin{enumerate}
\item Its input is a list of labeled samples $((x_1,a_1),\ldots,(x_t,a_t))$, with $x_i \in \{0,1\}^n$ for all $i$ and $x_i$ pairwise distinct, as well as $a_i \in \{0,1\}$. 

\item Its parameters are $k \in \N$, given as part of the input, and $\ell = \lambda(\{x_1,\ldots,x_t\})$. 

\item The task is to decide whether there is a representation $r$ that has $\kappa(r) \in \mathcal{R}_k$ with $|r| \leq f(n,t,k,\ell)$ and for the concept $c = \rho(r)$, it holds that $c(x_i) = a_i$ for all $i$.  
\end{enumerate}
We call this parameterized problem the \emph{parameterized consistency checking problem} associated with $\mathcal{L}$ and $f$.
\end{defn}

Ignoring all parameters in Definition \ref{def:decision-from-learning} gives the usual notion of consistency checking problems~\cite{PittV88}. 
The function $f$ is required to provide an explicit upper-bound on the sought-after representation; indeed, while each learning instance comes with a hidden representation of a certain size, this is not the case with the consistency checking problem and we need to allow the running time bounds to take the size of the representation into consideration.

It is well-known that, under the assumption that the hypothesis space is not too large, there is an equivalence between a learning problem being PAC-learnable and the corresponding decision problem being solvable in randomized polynomial time.
We now observe that a similar equivalence holds also in the parameterized setting.
\begin{lem} \label{lem:red-dec-to-learn}
Let $\mathcal{L} = (\mathcal{C},\rho,\{\mathcal{R}_k\}_{k\in \N} ,\lambda)$ be a parameterized learning problem,
and let $f(n,t,k,\ell)$ be a function.
If $\mathcal{L}$ is $(T,S)$-PAC learnable and $\lambda$ is $L$-sampleable, then there exists a randomized algorithm that, upon input labeled samples $\{(x_1,a_1),\ldots,(x_t,a_t)\}$ and parameters $k,\ell$ with $\lambda(\{x_1,\ldots,x_t\}) = \ell$, returns a consistent hypothesis $h \in \mathcal{R}_k$ of size $|h| \leq s = f(n,t,k,\ell)$ with probability at least $1-\delta$ if it exists, and runs in time $\bigoh(T(n, s, t, \delta, k, \ell) + L(n,t,\ell) \cdot S(n, s, t,\delta,k,\ell))$.
\end{lem}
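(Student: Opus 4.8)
The plan is to establish the ``easy'' direction of the learning-to-consistency correspondence in the parameterized setting: I would turn the promised PAC learner into a consistency checker by \emph{simulating} its sample oracle out of the input sample set itself. Given the labeled samples $((x_1,a_1),\dots,(x_t,a_t))$ with $X = \{x_1,\dots,x_t\}$ and the promised parameters $k,\ell = \lambda(X)$, I would first invoke the $L$-sampleability of $\lambda$ (Definition \ref{def:sampleabe}) to obtain the randomized algorithm $\mathcal{A}$ whose output $C$ is distributed according to some distribution $\mathcal{D}^\ast$ that is \emph{typical} for $X$; hence $\supp \mathcal{D}^\ast = X$ and $\lambda(\mathcal{D}^\ast) = \lambda(X) = \ell$, and a single draw costs $L(n,t,\ell)$ time.

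The reduction then runs the assumed $(T,S)$-PAC learner, answering each of its oracle queries as follows: draw $x \sim \mathcal{D}^\ast$ by one call to $\mathcal{A}$, locate the index $i$ with $x = x_i$ by a table lookup, and return the labeled sample $(x_i, a_i)$. If a consistent representation exists at all, fix any such $r^\ast \in \mathcal{R}_k$ with $|r^\ast| \le s = f(n,t,k,\ell)$ and let $c^\ast = \rho(r^\ast)$; since $c^\ast(x_i) = a_i$ on all of $X = \supp \mathcal{D}^\ast$, the simulated oracle is exactly a faithful PAC oracle for the hidden instance $(c^\ast, \mathcal{D}^\ast)$, so the learner's guarantee applies verbatim with $c^\ast$ as hidden concept and parameter $k$. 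I would run it with the given $\delta$ and with accuracy $\varepsilon$ set just below the least probability mass of $\mathcal{D}^\ast$; for the canonical uniform typical distribution arising from support-only parameterizations this mass is $1/t$, so $\varepsilon = 1/(t+1)$ suffices. With probability at least $1-\delta$ the learner then returns $h \in \mathcal{R}_k$ with $\operatorname{err}_h \le \varepsilon$, and because $\supp \mathcal{D}^\ast = X$, any point $x_i$ on which $\rho(h)$ disagrees with $c^\ast$ contributes at least $\mathcal{D}^\ast(x_i) > \varepsilon$ to $\operatorname{err}_h$, forcing $h$ to be consistent on all of $X$. A final verification pass, evaluating $\rho(h)$ on every $x_i$, lets us output $h$ only when it is genuinely consistent, giving one-sided error.

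For the running time, the learner's own computation costs $T(n,s,1/\varepsilon,1/\delta,k,\ell)$, and since $1/\varepsilon = \Theta(t)$ and $T$ is non-decreasing this is $\bigoh(T(n,s,t,\delta,k,\ell))$; each of the at most $S(n,s,1/\varepsilon,1/\delta,k,\ell) = \bigoh(S(n,s,t,\delta,k,\ell))$ oracle calls is simulated in time $L(n,t,\ell)$, with the lookup and the final verification being lower-order, yielding the additive term $L(n,t,\ell)\cdot S(n,s,t,\delta,k,\ell)$ and hence the claimed bound.

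I expect the main obstacle to be the interplay, hidden inside the oracle simulation, of three demands that the distribution fed to the learner must meet simultaneously: it must (i) have parameter exactly $\ell$ so as not to violate the learner's promise on $\lambda$, (ii) have support exactly $X$ so that every drawn point carries a known label $a_i$, and (iii) place enough mass on every point of $X$ that $\varepsilon$-accuracy certifies \emph{exact} consistency on all $t$ samples. Properties (i) and (ii) are precisely what \emph{typicality} buys us, and $L$-sampleability is what makes drawing from such a distribution algorithmically cheap; the delicate point is (iii), namely pinning $\varepsilon$ below $\min_i \mathcal{D}^\ast(x_i)$ so that a low-generalization-error hypothesis is forced to be genuinely consistent rather than merely approximately so. This is what dictates the appearance of $t$ in the accuracy slot of the running-time bound, and it is cleanest in the uniform/support-based regime that covers the paper's examples.
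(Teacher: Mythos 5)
Your proposal matches the paper's own proof essentially step for step: both simulate the learner's oracle by drawing from the $L$-sampleable typical distribution for $X$, run the $(T,S)$-PAC learner with $\varepsilon = 1/(t+1)$ so that small generalization error over a distribution supported exactly on $X$ forces exact consistency, and account for the running time as $T$ plus $L(n,t,\ell)$ per oracle call times $S$. The only real difference is minor: you add a verification pass for one-sided error, whereas the paper instead notes that the learner must be halted once it exceeds its $T$-step/$S$-sample budget (returning an arbitrary hypothesis), a cutoff your write-up should also include so the claimed time bound holds even when no consistent hypothesis exists.
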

\iflong \begin{proof}
Let $((x_1,a_1),\ldots,(x_t,a_t)),k,\ell$ be some inputs to the parameterized consistency problem associated with $\mathcal{L}$ such that $\lambda(\{x_1,\ldots,x_t\}) = \ell$.
We set $\varepsilon = \frac{1}{t+1}$, and take $\delta$ as desired for the success probability.

We run the $(T,S)$-PAC algorithm with inputs $n = |x_1| = \ldots = |x_t|$, $\varepsilon$ and $\delta$.
During execution of this algorithm, we emulate the oracle to a hidden distribution by calling the sampling algorithm guaranteed to exist by $\lambda$.

By definition, the PAC-learning algorithm will return a hypothesis from $\mathcal{R}_k$ inconsistent with at most a $1/(t+1)$-fraction of the samples, and therefore consistent with all of them, with probability at least $1-\delta$ (if it exists), which we output. Otherwise, we output some arbitrary hypothesis.
This gives a randomized algorithm for the parameterized consistency checking problem satisfying the required bound on the success probability.

As for the running time, note that we call the learning algorithm with the parameters as in the statement of the Lemma, which takes $T(n, s, t, \delta, k, \ell)$ steps, and for each of the $S(n, s, t,\delta,k,\ell)$ oracle calls the learning algorithm makes, we need to spend time $L(n,t,\ell)$ to produce a sample from the distribution.
We note that, if the PAC algorithm were to spend more than $T(n,s,t,k,\ell)$ steps (or draw more than $S(n,s,t,k,\ell)$ samples), we may safely halt its execution and return an arbitrary hypothesis\footnote{To be completely precise, the functions would have to meet the criterion of being asymptotically equivalent to some time-constructible function. Since this plays no role whatsoever in any natural scenario, we chose to ignore these details.}.
\end{proof}\fi

\begin{thm} \label{thm:red-dec-to-learn}
Let $\mathcal{L} = (\mathcal{C},\rho,\{\mathcal{R}_k\}_{k\in \N},\lambda)$ be a parameterized learning problem,
and let $f(n,t,k,\ell)$ be a function.

If $\mathcal{L}$ is in $\fptpac$, $f \in \fpt(n,t;k,\ell)$ and $\lambda$ is $S$-sampleable for some $S\in \fpt(n,t;\ell)$, then the parameterized consistency checking problem $\consis{f}{\mathcal{L}}$ associated with $\mathcal{L}$ and $f$ is in $\FPT$.

Similarly, if $\mathcal{L}$ is in $\xppac$, $f \in \xp(n,t;k,\ell)$ and $\lambda$ is $S$-sampleable for some $S \in \xp(n,t;\ell)$,
then the parameterized consistency checking problem $\consis{f}{\mathcal{L}}$ associated with $\mathcal{L}$ and $f$ is in $\XP$.
\end{thm}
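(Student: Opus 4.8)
The plan is to derive both statements directly from Lemma~\ref{lem:red-dec-to-learn}, which already does the real work of turning a PAC learner into a consistency checker; what remains is to feed in the correct function classes and to verify that the resulting running-time expression stays inside $\FPT$ (resp.\ $\XP$). Concretely, I would first fix the error probability $\delta$ to a constant (say $\delta = 1/2$), since for membership in the randomized analogues of $\FPT$ and $\XP$ we only need bounded one-sided error; recall that the lemma already sets $\varepsilon = \frac{1}{t+1}$ internally, so the dependence on $1/\varepsilon$ is polynomial in the input. Invoking the lemma with these choices yields a randomized algorithm for $\consis{f}{\mathcal{L}}$ whose running time is $\bigoh\bigl(T(n,s,t,\delta,k,\ell) + L(n,t,\ell)\cdot S(n,s,t,\delta,k,\ell)\bigr)$ with $s = f(n,t,k,\ell)$, where $\delta$ is now a fixed constant and the third argument $t$ already subsumes the $1/\varepsilon = t+1$ dependence.

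For the $\FPT$ statement, I would then substitute the hypotheses on the function classes. Since $\mathcal{L} \in \fptpac$, both $T$ and $S$ are bounded by $g(k,\ell)\cdot p(n,s,t)$ for a computable $g$ and a polynomial $p$; since $f \in \fpt(n,t;k,\ell)$, the size bound satisfies $s \le g_f(k,\ell)\cdot p_f(n,t)$; and the sampling cost satisfies $L \le g_L(\ell)\cdot p_L(n,t)$ because $\lambda$ is $S$-sampleable with $S \in \fpt(n,t;\ell)$. The one genuine step is the composition: substituting the $\fpt$-bounded quantity $s$ into the size argument of $p$. Because $p$ is a polynomial, it is bounded by a constant power $d$ of its largest argument, and $(g_f(k,\ell)\,p_f(n,t))^d = g_f(k,\ell)^d\,p_f(n,t)^d$ splits cleanly into a function of $(k,\ell)$ times a polynomial in $(n,t)$. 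Hence each of $T$, $S$ and $L$ lies in $\fpt(n,t;k,\ell)$, and since sums and products of such functions remain in $\fpt(n,t;k,\ell)$, so does the whole running time; this is exactly membership in $\FPT$ for the combined parameter $(k,\ell)$.

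The $\XP$ statement follows by the same template with $\fpt$ replaced by $\xp$ throughout. Now the time bounds have the shape $p(n,s,t)^{f_T(k,\ell)}$, the size bound is $s \le p_f(n,t)^{f_f(k,\ell)}$, and $L \le p_L(n,t)^{f_L(\ell)}$. Substituting $s$ into the polynomial base raises the relevant exponent only by the constant degree $d$, leaving a bound of the form (polynomial in $n,t$) raised to a function of $(k,\ell)$; raising this in turn to the power $f_T(k,\ell)$ merely multiplies the exponent by another function of $(k,\ell)$, so the result stays in $\xp(n,t;k,\ell)$. As sums and products of $\xp$-functions are again $\xp$, the total running time is $\xp(n,t;k,\ell)$, giving membership in $\XP$.

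I expect the main obstacle to be exactly this closure-under-composition bookkeeping: making precise that plugging the $\fpt$- or $\xp$-sized hypothesis bound $s$ into the size argument of $T$ and $S$ does not escape the class. To keep the argument clean I would isolate it as a small auxiliary claim---that composing a function that is polynomial (resp.\ polynomial-with-parameter-exponent) in $s$ with an $\fpt$ (resp.\ $\xp$) bound on $s$ yields again an $\fpt$ (resp.\ $\xp$) function---rather than manipulating the inequalities ad hoc. A secondary point I would flag explicitly is that the algorithm produced by the lemma is randomized with one-sided error, so strictly speaking the conclusion is membership in the randomized versions of $\FPT$ and $\XP$; I would state this rather than pretend the procedure is deterministic.
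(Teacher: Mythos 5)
Your proposal is correct and follows essentially the same route as the paper: both derive the theorem by invoking Lemma~\ref{lem:red-dec-to-learn} with the appropriate functions plugged into $T$, $S$ and $L$, and then verify the one substantive point—that substituting the $\fpt$- (resp.\ $\xp$-) bounded hypothesis size $s = f(n,t,k,\ell)$ into the polynomial part of the running-time bounds keeps the total within $\fpt(n,t;k,\ell)$ (resp.\ $\xp(n,t;k,\ell)$), which is exactly the composition step the paper's proof highlights. Your closing remark that the algorithm obtained is randomized, so the conclusion is strictly membership in the randomized analogues of $\FPT$ and $\XP$, is a fair point of precision that the paper's own (terse) proof glosses over.
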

\iflong \begin{proof}
Follows by inserting the appropriate functions into $T,S$ and $L$ in Lemma \ref{lem:red-dec-to-learn}.
Note, in particular, that the dependency on $s = f(n,t,k,\ell)$ in $T$ and $S$ is only allowed to be of the form $g(k,\ell)\cdot \poly(n,s,1/\varepsilon,1/\delta)$ or $\poly(n,s,1/\varepsilon,1/\delta)^{g(k,\ell)}$, by the respective definitions of $\fptpac$ and $\xppac$. Hence, if $f \in \fpt(n,t;k,\ell)$, then $\poly(n,f(n,t;k,\ell),1/\varepsilon,1/\delta)$ is still bounded by $g'(k,\ell)\cdot\poly(n,t,1/\varepsilon,1/\delta)$, and similarly for $f \in \xp(n,t;k,\ell)$.
\end{proof}\fi

\begin{lem} \label{lem:red-learn-to-dec}
Let $\mathcal{L} = (\mathcal{C},\rho,\{\mathcal{R}_k\}_{k\in \N},\lambda)$ be a parameterized learning problem, and let $\mathcal{H}_{n,k} = \mathcal{R}_k \cap \rho^{-1}(\mathcal{C}_n)$ be the set of all representations under $\rho$ in $\mathcal{R}_k$ of concepts in $\mathcal{C}_n$.
Suppose there is a deterministic algorithm running in time $T(n, t, \delta, k, \ell)$ for the consistency checking problem $\consis{\log|\mathcal{H}_{n,k}|}{\mathcal{L}}$. Then, $\mathcal{L}$ is $(T',S)$-PAC learnable,
where\iflong\footnote{The dependency on $s$ disappears because $s$ itself is bounded by a function of $n$ and $k$.}\fi
\begin{align*}
S(n,s,1/\varepsilon,1/\delta,k,\ell) &= \frac{1}{\varepsilon}(\log |\mathcal{H}_{n,k}| + \frac{1}{\delta})),\\
T'(n,s,1/\varepsilon,1/\delta,k,\ell) &= T(n,s,\frac{1}{\varepsilon}(\log |\mathcal{H}_{n,k}| + \frac{1}{\delta}),k,\ell).
\end{align*}
\end{lem}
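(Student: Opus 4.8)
The plan is to follow the classical ``sample-and-fit'' (Occam) paradigm, now driven by the assumed consistency-checking oracle rather than by hand-crafted hypotheses. Given the inputs $n,\varepsilon,\delta,k,\ell$, the learning algorithm I propose first draws $m = S(n,s,1/\varepsilon,1/\delta,k,\ell) = \tfrac{1}{\varepsilon}\bigl(\log|\mathcal{H}_{n,k}| + \tfrac{1}{\delta}\bigr)$ labeled samples $((x_1,a_1),\dots,(x_m,a_m))$ from the oracle, then hands these together with $k$ to the deterministic algorithm for $\consis{\log|\mathcal{H}_{n,k}|}{\mathcal{L}}$, and finally outputs whatever consistent representation $h \in \mathcal{R}_k$ that algorithm returns. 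Two properties then have to be established: soundness, i.e.\ that $h$ is a valid PAC hypothesis with probability at least $1-\delta$; and that the procedure respects the claimed budgets $T'$ and $S$, which amounts to correctly threading the parameter $\ell$ through the black-box consistency checker.

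The technical core is the soundness (generalization) argument, and here the randomness is entirely in the $m$ independent draws, since the checker is deterministic. Let $\mathcal{G} \subseteq \mathcal{H}_{n,k}$ be the search space of the checker, namely the representations in $\mathcal{R}_k$ of concepts on $n$ bits whose size is at most $\log|\mathcal{H}_{n,k}|$. I would fix any $h \in \mathcal{G}$ with $\operatorname{err}_h > \varepsilon$ and observe that it survives all $m$ draws with probability at most $(1-\varepsilon)^m \le e^{-\varepsilon m}$; a union bound over $\mathcal{G}$ bounds the probability that \emph{some} consistent bad hypothesis exists by $|\mathcal{H}_{n,k}|\,e^{-\varepsilon m}$. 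Substituting $m = \tfrac{1}{\varepsilon}(\log|\mathcal{H}_{n,k}| + \tfrac1\delta)$ and using $e^{-\log|\mathcal{H}_{n,k}|}\le 1/|\mathcal{H}_{n,k}|$ together with $e^{-1/\delta}\le\delta$ (valid for $0<\delta\le 1$ regardless of the base of $\log$) collapses this to $\delta$, so with probability at least $1-\delta$ every consistent $h\in\mathcal{G}$—in particular the one returned by the checker—has error at most $\varepsilon$. For the running time, the learning algorithm is handed $\ell = \lambda(\mathcal{D}_n)$ whereas the consistency instance is parameterized by $\lambda(\{x_1,\dots,x_m\})$; since $\{x_1,\dots,x_m\}\subseteq\supp\mathcal{D}_n$, the equivalent formulation $\lambda(X)=\min_{\mathcal{D}:X\subseteq\supp\mathcal{D}}\lambda(\mathcal{D})$ gives $\lambda(\{x_1,\dots,x_m\})\le \ell$, and monotonicity of $T$ then bounds the cost by $T(n,m,\delta,k,\ell)$, which is exactly $T'$ after plugging in $m=S$, while the properness requirement $\kappa(h)\in\mathcal{R}_k$ is immediate from the definition of the consistency problem.

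The step I expect to be the main obstacle is completeness: the checker returns a useful answer only if $\consis{\log|\mathcal{H}_{n,k}|}{\mathcal{L}}$ is actually a ``yes'' instance on the drawn samples, i.e.\ some representation in $\mathcal{R}_k$ of size at most $\log|\mathcal{H}_{n,k}|$ is consistent. The hidden representation $r^\ast\in\mathcal{R}_k$ is consistent with all $(x_i,a_i)=(x_i,c^\ast(x_i))$ by construction, but it need not a priori satisfy $|r^\ast|\le\log|\mathcal{H}_{n,k}|$. The whole reason for choosing the size budget $f=\log|\mathcal{H}_{n,k}|$ is to guarantee that a consistent target lies inside the search space, and I would make this precise by arguing that $\log|\mathcal{H}_{n,k}|$ bits suffice to single out an element of $\mathcal{H}_{n,k}$, so that one may assume each concept of $\mathcal{C}_n$ representable within $\mathcal{R}_k$ admits a length-minimal $\rho$-representation of size at most $\log|\mathcal{H}_{n,k}|$. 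Verifying that this budget genuinely accommodates a consistent representative under the \emph{fixed} scheme $\rho$ is the delicate point (it is precisely where the interplay between the cardinality $|\mathcal{H}_{n,k}|$ used in the union bound and the size parameter $\log|\mathcal{H}_{n,k}|$ used by the checker must be reconciled); once it is in hand, the checker necessarily outputs a consistent $h\in\mathcal{R}_k$ on every run, and the three parts together yield $(T',S)$-PAC learnability.
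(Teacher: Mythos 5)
Your proposal is correct and follows essentially the same route as the paper's proof: draw $\tfrac{1}{\varepsilon}\bigl(\log|\mathcal{H}_{n,k}|+\tfrac{1}{\delta}\bigr)$ samples, feed them with $k$ to the deterministic consistency checker, output its hypothesis, and combine the Occam-style union bound (which the paper leaves implicit as ``a standard argument'') with the monotonicity facts $\lambda(\{x_1,\ldots,x_t\})\le\ell$ and $T$ non-decreasing to get the stated $S$ and $T'$. The ``completeness'' worry you flag---whether the hidden representation actually fits the size budget $\log|\mathcal{H}_{n,k}|$ under the fixed scheme $\rho$---is legitimate but is not resolved by the paper either: it is assumed away there (cf.\ the footnote asserting that $s$ is bounded by a function of $n$ and $k$), so this point does not separate your argument from the paper's, though your proposed fix by re-indexing $\mathcal{H}_{n,k}$ would not be valid for a fixed $\rho$ and should be stated as an explicit hypothesis instead.
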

\iflong \begin{proof}
The learning algorithm proceeds by drawing $t = \frac{1}{\varepsilon}(\log |\mathcal{H}_{n,k}| + \frac{1}{\delta})$ samples $(x_1,a_1),\ldots,(x_t,a_t)$ from the unknown distribution $\mathcal{D}$, and calls the algorithm using these $t$ samples and the parameter $k$. It outputs the hypothesis $h$ returned by the algorithm, which is guaranteed to be in $\mathcal{R}_k$, and satisfies $|h| \leq \log |H_{n,k}|$.
By a standard argument via the union bound, the resulting algorithm is indeed a parameterized PAC algorithm.

For this call to the algorithm, note that by definition, $\lambda(\{x_1,\ldots,x_t\}) \leq \ell$.
Recall from Definition \ref{def:learnable} that we assumed $T$ and $S$ to be non-decreasing.
Hence, by assumption, this takes time at most $T(n,\frac{1}{\varepsilon}(\log |\mathcal{H}_{n,k}| + \frac{1}{\delta}),k,\ell),$ and uses $t = \frac{1}{\varepsilon}(\log |\mathcal{H}_{n,k}| + \frac{1}{\delta})$ samples.
\end{proof}\fi

\begin{thm}\label{thm:red-learn-to-dec}
Let $\mathcal{L} = (\mathcal{C},\rho,\{\mathcal{R}_k\}_{k\in\N},\lambda)$, and denote the set of representations of $\mathcal{C}_n$ in $\mathcal{R}_k$ under $\rho$ as $\mathcal{H}_{n,k}$.

If the parameterized consistency checking problem $\consis{\log |\mathcal{H}_{n,k}|}{\mathcal{L}}$ is in $\FPT$ and $\log |\mathcal{H}_{n,k}| \in \fpt(n;k)$,
then $\mathcal{L}$ is in $\fptpactime$.

Similarly, if the parameterized consistency checking problem  $\consis{\log |\mathcal{H}_{n,k}|}{\mathcal{L}}$ is in $\XP$ and $\log |\mathcal{H}_{n,k}| \in \xp(n;k)$, then $\mathcal{L}$ is in $\xppactime$.
\end{thm}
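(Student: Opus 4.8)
The plan is to obtain both statements directly from Lemma~\ref{lem:red-learn-to-dec} by substituting the hypothesised complexity bounds, mirroring the way Theorem~\ref{thm:red-dec-to-learn} was derived from Lemma~\ref{lem:red-dec-to-learn} in the opposite direction. Concretely, Lemma~\ref{lem:red-learn-to-dec} converts any deterministic algorithm for $\consis{\log|\mathcal{H}_{n,k}|}{\mathcal{L}}$ of running time $T(n,t,k,\ell)$ into a $(T',S)$-PAC learner with sample count $S=\frac{1}{\varepsilon}(\log|\mathcal{H}_{n,k}|+\frac{1}{\delta})$ and running time $T'=T(n,S,k,\ell)$. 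Hence it suffices to feed in the two hypotheses and verify that the resulting pair $(T',S)$ lands in $\mathsf{PAC}[\fpt,\poly]=\fptpactime$ in the first case and in $\mathsf{PAC}[\xp,\poly]=\xppactime$ in the second.

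For the FPT case I would write the hypotheses as $T(n,t,k,\ell)\le g(k,\ell)\,p(n,t)$ and $\log|\mathcal{H}_{n,k}|\le h(k)\,q(n)$ for computable $g,h$ and polynomials $p,q$. For the running time I would substitute $t=S$ into $T$ and invoke the closure property that a polynomial evaluated at an $\fpt$-bounded argument is again $\fpt$-bounded: $p(n,S)$ splits into a polynomial in $n,1/\varepsilon,1/\delta$ times a factor depending only on $k$, so that $T'\le g'(k,\ell)\,\poly(n,1/\varepsilon,1/\delta)\in\fpt$. The dependence on the target size $s$ is discharged here via the footnote to Lemma~\ref{lem:red-learn-to-dec}, namely that $s$ is itself bounded by a function of $n$ and $k$ (every representation counted by $\mathcal{H}_{n,k}$ has length at most $\log|\mathcal{H}_{n,k}|$ up to encoding). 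The sample bound $S$ is governed entirely by $\log|\mathcal{H}_{n,k}|$, which under the stated hypothesis plays the role of the polynomial-size complexity measure and simultaneously upper-bounds the length of the returned hypothesis; confirming that $S$ belongs to the sample class $\poly$ is exactly where the hypothesis $\log|\mathcal{H}_{n,k}|\in\fpt(n;k)$ enters, and I return to this point below.

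The XP case is identical in structure, with the single replacement of the closure principle ``$\poly\circ\fpt=\fpt$'' by ``$\poly\circ\xp=\xp$''. Writing $T(n,t,k,\ell)\le p(n,t)^{g(k,\ell)}$ and $\log|\mathcal{H}_{n,k}|\le q(n)^{h(k)}$, substituting $t=S$ into $T$ yields a bound of the form $T'\le \poly(n,1/\varepsilon,1/\delta)^{g''(k,\ell)}\in\xp$, while the sample count $S$ is handled in the same way as before, giving $\mathcal{L}\in\xppactime$.

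I expect the only delicate part to be the bookkeeping of these substitutions. The main obstacle is verifying that the sample count $S$ -- which already absorbs the parameter-dependent factor coming from $\log|\mathcal{H}_{n,k}|$ -- genuinely lies in the \emph{polynomial} sample class demanded by $\fptpactime$ (resp.\ $\xppactime$), rather than merely in $\fpt$ (resp.\ $\xp$); this requires treating $\log|\mathcal{H}_{n,k}|$ as a legitimate size argument in the polynomial block, and it is the one place where the argument is not a mechanical substitution. Alongside this, one must apply the closure properties to the correct grouping of arguments, treating $n,s,1/\varepsilon,1/\delta$ as the polynomial block and $k,\ell$ as the parameter block, and ensure that plugging the sample count into the time bound inflates neither the sample count out of $\poly$ nor the running time beyond $\fpt$ (resp.\ $\xp$).
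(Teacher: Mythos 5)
Your proposal is correct and takes essentially the same route as the paper: the paper's entire proof is that the statement ``follows from Lemma~\ref{lem:red-learn-to-dec} by inserting the respective running time functions and the bounds on $\mathcal{H}_{n,k}$,'' which is precisely your substitution-plus-closure-properties argument, including discharging the dependence on $s$ via the footnote to that lemma. The delicate point you flag --- that under the hypothesis $\log|\mathcal{H}_{n,k}|\in\fpt(n;k)$ the sample bound $\frac{1}{\varepsilon}(\log|\mathcal{H}_{n,k}|+\frac{1}{\delta})$ is literally only $\fpt$ rather than polynomial in the six fixed arguments $n,s,1/\varepsilon,1/\delta,k,\ell$, so that landing in $\mathsf{PAC}[\fpt,\poly]$ requires treating $\log|\mathcal{H}_{n,k}|$ as a size quantity in the polynomial block --- is a real subtlety, but it is a looseness in the paper's own statement and definitions (its one-line proof silently glosses over it), not a gap introduced by your argument.
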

\iflong \begin{proof}
Follows from Lemma \ref{lem:red-learn-to-dec} by and then inserting the respective running time functions and the bounds on $\mathcal{H}_{n,k}$, which in particular bound the length of any potentially returned hypotesis.
\end{proof}\fi

The significance of the previous theorems lies in transferring parameterized algorithmic upper and lower bounds for consistency checking into upper and lower bounds for parameterized learning problems, respectively.

To wit, Theorem \ref{thm:red-dec-to-learn} allows us to conclude from the fact that a parameterized consistency checking  problem is efficiently solvable by a parameterized algorithm that also the parameterized learning problem it belongs to is efficiently solvable. This will be exploited in Theorems~\ref{thm:few_columns},~\ref{thm:induced_subgraphs}.

On the other hand, Theorem \ref{thm:red-learn-to-dec} tells us that an efficient algorithm for a parameterized learning problem implies an efficient algorithm for the corresponding paramerized consistency checking problem. Turning this around, we see that lower bounds on consistency checking imply lower bounds for learning. This will be exploited in Theorems \ref{thm:k_dnf_tight},~\ref{thm:graphnotPAC}.

\section{Previous Work on Parameterized Learning}
After having introduced this new framework, we dedicate a separate section to how it fits in 
with the few papers that have considered parameterized approaches to learning theory,
and how it differs from them. 

Proceeding chronologically, the earliest link between parameterized complexity theory and learning theory was established already by the pioneers of paramerized complexity \cite{DowneyEF93}. 
While they also establish a parameterized learning model and link learning complexity with well-studied graph problems, in contrast to our work, they study exact learning by extended equivalence queries.
\citeauthor{GartnerG07} (\citeyear{GartnerG07}) study the complexity of learning directed cuts in various learning models, including PAC learning.
While their respective result falls under the standard poly-time PAC learning, it is notable that the improved learning bound they show is in terms of a structural parameter of the graph.
Better-than-trivial exponential running-time bounds for learning $k$-term DNF formulas have been studied, e.g., $n^{\tilde{\mathcal{O}}(\sqrt{n \log k})}$~\cite{AlekhnovichBFKP08}.
We already compared our framework to the earlier work by  \citeauthor{ArvindKL09} (\citeyear{ArvindKL09}) in the previous section, which only identified the class $\fptpac$.

The more recent work of \citeauthor{LiL18} (\citeyear{LiL18}) on parameterized algorithms for learning mixtures of linear regressions is not, strictly speaking, within our framework of parameterized PAC learning, since their error parameter is the distance to the hidden concept rather than the probability of mislabelling under any distribution. However, their sample complexity bound is $\FPT$, turning into infinite when the parameters are not bounded, and it is also a notable example of a PAC-learning result that holds only for a certain family of distributions (an approach which is also covered by our framework; see the intuition behind Definition~\ref{def:sampleabe}).
Finally, \citeauthor{BergeremGR22} (\citeyear{BergeremGR22}) consider parameterized complexity of learning first-order logic; they show hardness results via parameterized reductions as well as $\FPT$-time learning via consistency checking, which aligns well with our framework.

\section{Parameterized DNF and CNF Learning}

For the most of this section, we consider the problem of learning a hidden DNF formula under various parameterizations, which are formally defined throughout the section. We first observe that the problem of learning a CNF formula is equivalent to learning a DNF formula under any parameterization that is preserved under negation of the formula: indeed, by negating the input to a DNF-learning algorithm and then negating the output formula we obtain a CNF-learning algorithm, and vice versa. Thus our results hold for learning CNF formulas as well, which we do not show explicitly below.

Let \textsc{Learning DNF} be the learning problem $(\mathcal{C}, \rho)$ where $\mathcal{C}$ is the set of boolean functions corresponding to formulas in disjunctive normal form and $\rho$ is the straightforward representation of the terms of the formula.
To define a fundamental variant of the problem where the number of terms in the target formula is bounded, let \textsc{Learning $k$-term DNF} be the parameterized learning problem $(\mathcal{C}, \rho, \{\mathcal{R}_k\}_{k\in\N}, \lambda)$ where $\mathcal{C}$ and $\rho$ are as above, $\kappa$ maps the representation of the formula to the number of terms in it, and $\lambda$ is any constant parameterization.
Alas, it is well-known that the consistency checking problem for learning even $2$-term DNF is $\NP$-hard~\cite{AlekhnovichBFKP08}, and thus by a non-parameterized variant of Theorem~\ref{thm:red-dec-to-learn}, \textsc{Learning $k$-term DNF} is not in $\xppactime$ unless $\P=\NP$. Thus, in this parameterization one should not expect any tractability results.

Another natural way to parameterize \textsc{Learning DNF} would be to bound the maximum length of a term. Specifically, let \textsc{Learning $k$-DNF} be the parameterized learning problem $(\mathcal{C}, \rho, \{\mathcal{R}_k\}_{k\in\N}, \lambda)$ where $\mathcal{C}$ and $\rho$ are as in \textsc{Learning DNF}, $\kappa = \kappa_{\mathcal{R}}$ maps the representation of the formula to the maximum length of a term in it, and $\lambda$ is trivial. It is well-known that, for every fixed $k$,  \textsc{Learning $k$-DNF} is efficiently PAC-learnable by a brute-force argument and hence the problem is immediately in $\xppactime$ parameterized by $k$~\cite[Example 2.9]{MLbook}.
On the other hand, we show that under standard parameterized complexity assumptions this learning result is tight, extending the hardness reduction of Arvind et al.~\cite{ArvindKL09} for $k$-juntas and --monomials.

\begin{thm}
     Assuming $\W[2] \ne \FPT$, \textsc{Learning $k$-DNF} and \textsc{Learning $k$-CNF} are not in $\fptpactime$.
    \label{thm:k_dnf_tight}
\end{thm}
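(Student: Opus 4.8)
The plan is to derive the lower bound from a hardness result for consistency checking, using the bridge built in the previous section. Write $\mathcal{L}$ for \textsc{Learning $k$-DNF}. By the direction turning a learning algorithm into a consistency-checking algorithm (the contrapositive of Theorem~\ref{thm:red-learn-to-dec}; note $\fptpactime\subseteq\fptpac$), if $\mathcal{L}$ were in $\fptpactime$ then the associated parameterized consistency checking problem $\consis{f}{\mathcal{L}}$ would be fixed-parameter tractable for the relevant size bound $f$. It therefore suffices to show that $\consis{f}{\mathcal{L}}$ is $\W[2]$-hard. Here $\lambda$ is trivial, so the only genuine parameter is $k$ (the maximum term length) and the sampleability hypothesis is met by the uniform distribution on the sample set; moreover, by the negation duality discussed at the start of the section it is enough to treat the DNF case, since the statement for \textsc{Learning $k$-CNF} then follows by negating inputs and outputs.

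First I would isolate the combinatorial content of the consistency problem. A consistent hypothesis is a DNF, each term of length at most $k$, such that every positive sample satisfies some term while no negative sample satisfies any term. The crucial observation is that, on an instance with a single positive sample $x^+$, a consistent $k$-DNF exists if and only if there is a single term of length at most $k$ that is satisfied by $x^+$ and falsified by all negative samples: in the forward direction $x^+$ must satisfy some term $T$ of the formula, and $T$ necessarily falsifies every negative sample; the backward direction is immediate. I expect this collapse to be the point requiring the most care, since it is what lets us ignore the (unbounded) number of terms and reduce the whole problem to finding one short monomial. One also has to check that $f$ leaves room for such a term, which it does: a single $k$-term has size $\bigoh(k\log n)$ and so fits under any polynomially bounded $f$, and in particular under $\log|\mathcal{H}_{n,k}|$.

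Next I would reduce from \textsc{$k$-Hitting Set}, the defining $\W[2]$-hard problem. Given $(U,\mathcal{F},k)$ with $U=\{1,\dots,n\}$ and $\mathcal{F}=\{F_1,\dots,F_m\}$, I introduce one variable per element of $U$, take a single positive sample $x^+=1^n$, and for each $F_j$ a negative sample $x_j^-$ equal to $0$ exactly on the coordinates of $F_j$ and to $1$ elsewhere, with parameter $k$. A term satisfied by $x^+=1^n$ may contain only positive literals, so it is determined by the set $S\subseteq U$ of variables it mentions, and such a term falsifies $x_j^-$ exactly when $S\cap F_j\neq\emptyset$. Hence a valid term of length at most $k$ exists precisely when $\mathcal{F}$ admits a hitting set of size at most $k$, and by the collapse above this is equivalent to the existence of a consistent $k$-DNF. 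The reduction runs in polynomial time, keeps the parameter at $k$, and produces pairwise distinct samples (after discarding duplicate sets), so it is a parameterized reduction establishing $\W[2]$-hardness of $\consis{f}{\mathcal{L}}$; this extends the monomial reduction of Arvind et al.

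Finally, assuming $\W[2]\neq\FPT$, the $\W[2]$-hardness just shown forbids a fixed-parameter algorithm for $\consis{f}{\mathcal{L}}$, so by the contrapositive of Theorem~\ref{thm:red-learn-to-dec}, \textsc{Learning $k$-DNF} is not in $\fptpactime$, and the negation duality yields the same conclusion for \textsc{Learning $k$-CNF}. The principal obstacle is the single-positive-sample collapse together with matching it to the consistency problem's size bound $f$; the remainder is a routine \textsc{$k$-Hitting Set} encoding.
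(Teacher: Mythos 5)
Your proof is correct and follows essentially the same route as the paper: a parameterized reduction from \textsc{$k$-Hitting Set} to the associated consistency checking problem combined with the bridge theorem, where your DNF construction (single all-ones positive sample, complemented characteristic vectors as negative samples, plus the one-term collapse) is precisely the De Morgan dual of the paper's CNF construction (characteristic vectors as positive samples, a single all-zeros negative sample, and a clause of positive literals). One small slip in bookkeeping: the direction you invoke is Theorem~\ref{thm:red-dec-to-learn} (learnability implies FPT consistency checking, applicable to $\fptpactime$ since $\fptpactime \subseteq \fptpac$), not the contrapositive of Theorem~\ref{thm:red-learn-to-dec} --- your description of the content is the right one, and the paper's own discussion paragraph swaps the two labels in the same way, but the citation should match the former.
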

\iflong \begin{proof}
    We show the result for \textsc{Learning $k$-CNF}. Assume the contrary, then by Theorem~\ref{thm:red-dec-to-learn} there is an $\FPT$ algorithm for the consistency checking problem. We now show that there is a parameterized reduction from \textsc{$k$-Hitting Set}  to the consistency checking problem for \textsc{Learning $k$-CNF} even for $f$ linear in $n$ and $k$, reaching a contradiction.

    Consider an instance $(U, \mathcal{F})$ of \textsc{$k$-Hitting Set}. Set $n = |U|$, $t = |\mathcal{F}| + 1$, for each $i \in [t - 1]$, $x_i[j] = 1$ iff $j$-th element of the universe belongs to the $i$-th set of $\mathcal{F}$, and $a_i = 1$. Finally, set $x_t$ to be a zero vector and $a_t = 0$.

    First, consider a $k$-hitting set $S \in U$. Consider a one-clause $k$-CNF formula $\varphi$ that contains positive literals corresponding to the set $S$. For each $i \in [t - 1]$, $\varphi(x_i) = 1 = a_i$ since $S$ intersects the $i$-th set of $\mathcal{F}$, and the corresponding element of $x_i$ is set to $1$. Moreover, $\varphi(x_{t}) = 0 = a_{t}$ since $\varphi$ contains no negative literals.

    In the other direction, consider a $k$-CNF formula $\varphi$ that labels all the samples correctly. Since $\varphi(x_{t}) = a_{t} = 0$, there exists a clause $C$ in $\varphi$ that only contains positive literals. For each $i \in [t - 1]$, $\varphi(x_i) = a_i = 1$, and so $C(x_i) = 1$, thus there exists a positive literal in $C$ which variable corresponds to an element of the $i$-th set of $\mathcal{F}$. Thus the set $S$ consisting of elements corresponding to variables in $C$ is a $k$-hitting set of $\mathcal{F}$.
\end{proof}\fi

The above covers the two most natural parameterizations of the hypothesis class --- the number of terms and the maximum size of a term in the target formula --- however, we did not yet touch parameterizations of the sample space. As a simple step, observe that solving the consistency problem is easy when each assignment assigns at most one variable to ``true''; a one-term DNF can always be produced in this case. On the other hand, the previously-known coloring reduction~\cite{PittV88} shows that even when each assignment has at most two variables assigned to ``true'', the problem becomes $\NP$-hard. In the following, we analyze the tractability of learning $k$-term DNF \ifshort and $k$-clause CNF \fi ``between'' these two cases, and show the following positive result based on parameterizing by a notion of ``backdoor to triviality''~\cite{GaspersS12,SemenovZOKI18}.

\iflong
\begin{lem}
    \label{lem:few_columns}
    \textsc{Learning $k$-Term DNF} is $\fptpactime$ parameterized by $k + s$, where $s$ is the minimum size of a set $S$ of variables such that the support of the distribution satisfies the following:
    \begin{itemize}
    \item each assignment assigns at most one variable outside of $S$ to ``true'', and 
    \item each of the remaining variables, at most one assignment assigns it to ``true''.
    \end{itemize}
\end{lem}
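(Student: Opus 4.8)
The plan is to invoke Theorem~\ref{thm:red-learn-to-dec}, which reduces membership in $\fptpactime$ to two ingredients: a bound $\log|\mathcal{H}_{n,k}| \in \fpt(n;k)$, and an $\FPT$ algorithm (parameterized by $k+s$) for the associated consistency checking problem. The first ingredient is immediate: a $k$-term DNF over $n$ variables is specified by choosing, for each of its at most $k$ terms and each of the $n$ variables, whether that variable occurs positively, negatively, or not at all, so $|\mathcal{H}_{n,k}| \le 3^{nk}$ and $\log|\mathcal{H}_{n,k}| = \bigoh(nk) \in \fpt(n;k)$. It remains to solve consistency checking in $\FPT$ time. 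I first observe that the backdoor set $S$ witnessing $\lambda(\{x_1,\dots,x_t\}) = s$ can be found in polynomial time: every variable set to true in at least two samples must lie in $S$, and once these are removed each remaining (``ordinary'') variable is true in at most one sample, so the second condition is met; the first is then achieved by adding to $S$, for every sample still containing two or more true ordinary variables, all but one of them, and disjointness of the ordinary columns makes this choice optimal.

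Write $V = S \cup O$ for the partition into the $s$ backdoor variables and the ordinary ones, let $\sigma(x) = x|_S \in \{0,1\}^S$ be the \emph{signature} of a sample, and let $\omega(x) \in O \cup \{\bot\}$ be the unique ordinary variable set to true in $x$ (or $\bot$); call $x$ Type~A if $\omega(x) = \bot$ and Type~B otherwise. The key structural step is to classify how a single term of a consistent formula can behave. If a term contains a positive literal on an ordinary variable $v$, it can be satisfied only by the unique sample with $v$ true, so it covers at most one sample; conversely, for a Type~B positive $x$ the single-literal term $\omega(x)$ covers exactly $x$ and, by uniqueness, touches no negative sample. I call such terms \emph{singletons}. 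Every other term --- a \emph{bulk} term --- uses no positive ordinary literal and is thus determined by its restriction $\tau$ to $S$ together with a set $W \subseteq O$ of negated ordinary variables. Here uniqueness pays off again: a bulk term with signature-part $\tau$ avoids all negatives iff no Type~A negative has signature consistent with $\tau$ (equivalently, the subcube of $\{0,1\}^S$ defined by $\tau$ avoids the set $B$ of Type~A negative signatures), and in that case taking $W$ to be exactly the ordinary variables of the Type~B negatives inside that subcube carves them out while --- again by uniqueness --- excluding no positive. Hence a feasible bulk term covers precisely the positives whose signature lies in its subcube.

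These two facts turn consistency checking into a covering problem on the signature space $\{0,1\}^S$, which has only $2^s$ points. Bulk terms are subcubes avoiding $B$, and covering a signature covers all positives carrying it; Type~A positives must be covered by a bulk term, while a Type~B positive whose signature lies in $B$ can only be handled by its own singleton. The algorithm therefore iterates over all $k_1 \le k$ and all choices of $k_1$ subcubes of $\{0,1\}^S$ avoiding $B$ --- at most $\sum_{k_1 \le k} (3^s)^{k_1} \le (k+1)\,3^{sk}$ possibilities --- and for each checks in polynomial time whether the positives left uncovered are all Type~B and number at most $k - k_1$, in which case one singleton per such sample completes a consistent $k$-term DNF. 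Since every consistent formula can be brought into this normal form of $k_1$ bulk terms plus singletons, the search is complete, and its running time is $3^{sk}\cdot n^{\bigoh(1)}$, i.e.\ $\FPT$ in $k+s$. Combined with the first ingredient, Theorem~\ref{thm:red-learn-to-dec} yields the claim.

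The main obstacle is the structural characterization of the second paragraph: although a term may in principle mix literals over $S$ and over $O$ in complicated ways, the two sparsity conditions on the support must be shown to force every useful term into one of just two shapes, and in particular Type~B negatives must be seen to be ``carved out'' of a bulk term for free because each ordinary variable fingerprints a single sample. Verifying that singletons are genuinely unavoidable --- exactly when a Type~B positive shares its signature with a Type~A negative, so that no feasible subcube can contain it --- is what pins down the precise term budget and certifies that the reduction to subcube cover loses nothing.
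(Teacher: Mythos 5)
Your proof is correct, but it takes a genuinely different route from the paper's. Both arguments funnel through Theorem~\ref{thm:red-learn-to-dec} by supplying an $\FPT$ algorithm for the associated consistency checking problem, but the paper proceeds by \emph{kernelization}: it partitions the samples into classes that agree on $S$ and applies three reduction rules --- deleting a positive sample from any class containing at least $k+2$ positives that each carry their own private true variable outside $S$, deleting any negative sample that sets a variable outside $S$ to true (one appends that variable's negation to every term of a candidate formula), and merging identically-assigned variables --- until the instance has at most $2^s(k+2)$ samples over at most $s+2^s(k+2)+1$ variables, after which it brute-forces all terms and all $k$-term DNFs over the kernel. You instead prove a normal-form theorem for consistent formulas (every useful term is either a singleton fingerprinting one Type-B positive, or a bulk term whose behaviour is determined by a subcube of $\{0,1\}^S$ avoiding the Type-A negative signatures, with the negated ordinary literals obtainable ``for free'') and thereby reduce consistency checking to a subcube-cover problem on the signature space, solved by direct enumeration. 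Your route buys a substantially better running time, roughly $3^{sk}\cdot n^{\bigoh(1)}$ versus the paper's bound of order $3^{k(s+2^s(k+2)+1)}$, which is doubly exponential in $s$; it also makes explicit, via the disjointness of the ordinary columns, that a minimum backdoor set $S$ is computable in polynomial time, a point the paper leaves implicit. What the paper's approach buys in exchange is a proof in the standard style of parameterized preprocessing, verified by three local exchange arguments rather than a global structural classification, which is arguably easier to check and to adapt; both approaches transfer equally well to the CNF/``false'' variants via the negation transformations mentioned after the lemma.
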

\begin{proof}
    By Theorem~\ref{thm:red-learn-to-dec}, it is enough to provide an $\FPT$ algorithm for the consistency checking problem with $f = \fpt(n;k+s)$; in particular, it suffices to argue that we can bound the length of all possible hypotheses by such an $f$. 
    
    Partition the given set of labelled samples $(x_1, a_1)$, \ldots, $(x_t, a_t)$ into $b$ classes where all assignments in the same class agree on $S$: $\{(x_i, a_i)\}_{i \in [t]} = \{(x_i^1, a_i^1)\}_{i \in [t_1]} \cup \cdots \cup \{(x_i^b, a_i^b)\}_{i \in [t_b]}$, where for each $i, i' \in [t_j]$, $\left.x_i^j\right|_S = \left.x_{i'}^j\right|_S$. We now show that the instance can always be reduced to a smaller one, unless each class is of size at most $k$. The next two claims provide the exact reduction rules, the first one shows that when there are too many yes-instances in the same class, one of them can be safely removed.

\begin{claim}\label{claim:add_pos_assignment}
    Let $x_1$, \ldots, $x_t$ be the assignments of $[n]$, and $a_1$, \ldots, $a_t$ be the respective labels. Let $t \ge k + 2$, $S \subset [n] \setminus [k + 2]$, and for each $i \in [k + 2]$, $a_i = 1$, $x_i$ assigns $i$ to ``true'' and all other variables in $[n] \setminus S$ to ``false''. Let no assignment from $x_1$, \ldots, $x_t$ except for $x_1$ assign $1$ to ``true''. If there exists a $k$-term DNF formula $\varphi$ agreeing with $(x_i, a_i)_{i = 2}^t$, then there exists also a $k$-term DNF formula $\psi$ agreeing with $(x_i, a_i)_{i = 1}^t$.
\end{claim}
\begin{claimproof}
    W.l.o.g. let $\varphi = T_1 \lor \cdots \lor T_k$. If each of $T_1$, \ldots, $T_k$ contains a positive literal, by the pigeonhole principle there exists an assignment $x_i$ out of $x_2$, \ldots, $x_{k + 2}$ that does not satisfy $\varphi$. Therefore, there exists a conjunction $T_j$ that only contains negative literals. If $T_j$ does not contain the negation of variable $1$, $\varphi(x_1) = 1 = \lambda_1$, so setting $\psi = \varphi$ concludes the proof. Otherwise, take $\psi$ to be $\varphi$ where the negation of variable $1$ is removed from the term $T_j$. It is easy to see that that $\psi(a_1) = 1 = \lambda_1$. Moreover, none of $a_2$, \ldots, $a_t$ changes the value of the assignment, since yes-instances cannot be unsatisfied by making the term $T_j$ smaller, and no no-instance become satisfied as only $x_1$ assigns $1$ to ``true''.
\end{claimproof}

The next claim allows us to simplify the negative samples.

\begin{claim}\label{claim:add_neg_assignment}
    Let $x_1$, \ldots, $x_t$ be the assignments of $[n]$, and $a_1$, \ldots, $a_t$ be the respective labels. Let $t \ge 2$, $S \subset [n] \setminus \{1\}$, and let $a_i = 0$, $x_1$ assigns the variable $1$ to ``true'' and all other variables in $[n] \setminus S$ to ``false''. Let no assignment from $x_1$, \ldots, $x_t$ except for $x_1$ assign $1$ to ``true''. If there exists a $k$-term DNF formula $\varphi$ agreeing with $(x_i, a_i)_{i = 2}^t$, then there exists also a $k$-term DNF formula $\psi$ agreeing with $(x_i, a_i)_{i = 1}^t$.
\end{claim}
\begin{claimproof}
    Let $\psi$ be the formula obtained from $\varphi$ by appending the negation of the variable $1$ to each term. For each $i \in [t] \setminus \{1\}$, the assignment $x_i$ keeps its value since it assigns $1$ to ``false''. The assignment $a_{t + 1}$, on the other hand, clearly does not satisfy the constructed formula.
\end{claimproof}

Finally, while the previous two claims reduce the number of samples to consider, the next one reduces the number of variables by removing those that are assigned identically.

\begin{claim}\label{claim:remove_vars}
   Let $x_1$, \ldots, $x_t$ be the assignments of $[n]$, and $a_1$, \ldots, $a_t$ be the respective labels.
   Let variables $1$ and $2$ be assigned identically in the assignments $x_1$, \ldots, $x_t$.
   There exists a $k$-term DNF formula $\varphi$ in variables $2$, \ldots, $n$ agreeing with $(x_i, a_i)_{i = 1}^t$, if and only if there exists a $k$-term DNF formula $\psi$ in variables $1$, \ldots, $n$ agreeing with $(x_i, a_i)_{i = 1}^t$.
\end{claim}
\begin{claimproof}
    In one direction, setting $\psi = \varphi$ clearly satisfies the claim. In the other direction, let $\varphi$ be the formula $\psi$ where in every term a literal of the variable $1$ is replaced with the same literal of the variable $2$. Since no assignment in $x_1$, \ldots, $x_t$ differentiates between $1$ and $2$, $\psi(x_i) = \varphi(x_i)$ for each $i \in [t]$.
\end{claimproof}

We now use Claims~\ref{claim:add_pos_assignment}, \ref{claim:add_neg_assignment}, and \ref{claim:remove_vars} to construct  an instance equivalent to the original with at most $2^s \cdot (k + 2)$ assignment-label pairs on at most $s + 2^s \cdot (k + 2) + 1$ variables. The algorithm simply applies Claims~\ref{claim:add_pos_assignment}, \ref{claim:add_neg_assignment}, and \ref{claim:remove_vars} until none is applicable, possible renumerating the assignments and the variables. We first claim that afterwards no class contains more than $(k + 2)$ assignments. If not, either there is a negatively labelled assignment in the class that assigns some variable outside of $S$ to ``true'', and Claim~\ref{claim:add_neg_assignment} is applicable, or there are at least $(k + 2)$ positive assignments in the class, and Claim~\ref{claim:add_pos_assignment} is applicable, which is a contradiction. Thus, in the constructed instance there are at most $2^s \cdot (k + 2)$ assignments, as desired.

Regarding the variables, observe that outside of $S$, each assignment assigns at most one variable to ``true''. Thus at most $s + 2^s \cdot (k + 2)$ variables assign at least one variable to ``true'', meaning that all other variables are only assigned to ``false''. Since Claim~\ref{claim:remove_vars} is not applicable, there can be at most one such variable.

Now, the algorithm tries all possible terms on the remaining variables, and checks whether each term does not satisfy any negatively-labelled assignment. The algorithm then tries all $k$-term DNF formulas consisting of such terms, and sees if any of them satisfies all positively-labelled assignments.

Regarding the running time, the applications of Claims~\ref{claim:add_pos_assignment}, \ref{claim:add_neg_assignment}, and \ref{claim:remove_vars} are clearly poly-time. In the reduced instance, there are at most $s + 2^s \cdot (k + 2) + 1$ variables, so at most $3^{s + 2^s \cdot (k + 2) + 1}$ possible terms and at most $3^{k \cdot(s + 2^s \cdot (k + 2) + 1)}$ different $k$-term DNF formulas on them. This bounds the size of the hypothesis space by $\fpt(n;k+s)$.
\end{proof}

By applying the aforementioned standard transformation from DNF to CNF, we can obtain an analogoue statement but with every occurrence of ``true'' inverted to ``false''. Another standard transformation which simply negates each assignment in every sample and each literal in the hidden hypothesis then allows us to obtain the same result for all four possible combinations of problems and notions of triviality.
\fi

\begin{thm}
    \label{thm:few_columns}
    \textsc{Learning $k$-Clause CNF} and \textsc{Learning $k$-Term DNF} are both $\fptpactime$ parameterized by $k + s$, where $s$ is the minimum size of a set $S$ of variables such that the support of the distribution satisfies the following:
    \begin{itemize}
    \item each assignment assigns at most one variable outside of $S$ to ``false'' (respectively ``true''), and 
    \item each of the remaining variables, at most one assignment assigns it to ``false'' (respectively ``true'').
    \end{itemize}
    \end{thm}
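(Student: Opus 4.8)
The plan is to derive the theorem from Lemma~\ref{lem:few_columns} by two elementary, polynomial-time, fully reversible syntactic transformations, so that no new algorithm has to be built. Lemma~\ref{lem:few_columns} already supplies the \textsc{Learning $k$-Term DNF} half of the statement: it is exactly the case in which the two backdoor conditions are phrased in terms of the value ``true'', matching the ``(respectively true)'' reading for DNF. It therefore remains only to transport this learning algorithm to \textsc{Learning $k$-Clause CNF} with the conditions phrased in terms of ``false'', and, for completeness, to the two remaining combinations.

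First I would set up the De Morgan duality in its variable-negated form. Given a hidden $k$-clause CNF $\psi$ together with oracle access to labeled samples $(x,\psi(x))$, I wrap the oracle so that every returned pair is handed to the DNF-learner of Lemma~\ref{lem:few_columns} as $(\bar x,\neg\psi(x))$, where $\bar x$ is the bitwise complement of $x$. The map $y\mapsto\neg\psi(\bar y)$ is again a $k$-term DNF, since De Morgan turns the $k$ clauses into $k$ terms and the substitution $y=\bar x$ only flips literals; thus the wrapped oracle is a faithful sample oracle for a $k$-term DNF, and after the learner returns a hypothesis $\varphi$ I output the $k$-clause CNF $x\mapsto\neg\varphi(\bar x)$. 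Because bitwise complementation is a measure-preserving bijection between $\mathcal{D}_n$ and its push-forward, the generalization error and the confidence $\delta$ are preserved exactly, so PAC-ness transfers. The decisive point to verify is parameter preservation: under complementation a variable set to ``false'' in $x$ is set to ``true'' in $\bar x$, so the two bullet conditions defining $s$ for CNF under ``false'' become precisely the two conditions of Lemma~\ref{lem:few_columns} under ``true'' on the complemented support, witnessed by the \emph{same} set $S$. Hence $s$ is unchanged, the clause count equals the term count so $k=\kappa$ is unchanged, and the wrapper costs only $\bigoh(n)$ per sample and makes no extra oracle calls. Since the parameterizations involved depend only on the support, they remain (linear-time) sampleable, so the transformation keeps us inside $\fptpactime$ and yields the \textsc{Learning $k$-Clause CNF} (``false'') half.

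For completeness I would also record the second transformation, which flips ``true'' and ``false'' within a fixed formula type: replace every sample $x$ by $\bar x$ and negate every literal of the hypothesis, i.e.\ pass from $\varphi$ to $x\mapsto\varphi(\bar x)$. This keeps the formula a DNF (respectively CNF) with the same number of terms (respectively clauses), preserves the generalization error by the same measure-preserving-bijection argument, and again leaves $s$ invariant with the same witnessing $S$. Composing the lemma with this transformation and with the duality above covers all four combinations of the two formula types and the two notions of triviality, of which the theorem states two.

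The only real obstacle is bookkeeping rather than a new idea: one must check carefully that \emph{both} bullet conditions defining $s$ --- the one bounding how many variables outside $S$ each assignment may set, and the one bounding how many assignments may set each remaining variable --- transform exactly as ``true $\leftrightarrow$ false'' under bitwise complementation, so that the witnessing set $S$, and hence the parameter value $s$, is literally identical on the two sides. Once this is confirmed, the reductions are parameter-preserving in both $k$ and $\ell=s$ and incur only polynomial overhead in time and zero overhead in sample count, so membership in $\fptpactime$ transfers verbatim.
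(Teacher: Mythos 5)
Your proposal is correct and follows exactly the paper's own route: the paper likewise proves the DNF/``true'' case as Lemma~\ref{lem:few_columns} and then obtains the theorem by the same two standard, parameter-preserving syntactic transformations (the De~Morgan duality turning $k$-clause CNF under ``false'' into $k$-term DNF under ``true'', and the assignment/literal negation covering the remaining combinations). Your write-up just spells out the measure-preservation and parameter-preservation bookkeeping that the paper leaves implicit.
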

    
    \ifshort
    \begin{proof}[Proof Sketch]
    By Theorem~\ref{thm:red-learn-to-dec}, it is enough to provide a fixed-parameter algorithm for the consistency checking problem with $f = \fpt(n;k+s)$; in particular, it suffices to argue that we can bound the length of all possible hypotheses by such an $f$. The main technique used in the proof is that of \emph{kernelization}~\cite{CyganFKLMPPS15}, i.e., preprocessing of the instance of the consistency checking problem.
    
    We begin by partitioning the set of labelled samples into equivalence classes, where two samples are equivalent if and only if they agree on $S$. By a series of claims, one can show that if an equivalence class contains more than $k + 2$ samples, one can be safely deleted without changing the instance.
    \end{proof}    
    \fi

\section{Learning on Graphs}
Let $H_1$, \ldots, $H_p$ be a fixed family of graphs. Let $\mathcal{H}$ be a class of graphs that do not contain any of $H_1$, \ldots, $H_p$ as an induced subgraph. In a classical \textsc{$\mathcal{H}$-Vertex Deletion} problem the task is, given a graph $G$ and a parameter $k$, to determine whether there exist a subset of vertices $S \subset V(G)$ of size at most $k$ such that $G - S$ belongs to $\mathcal{H}$. Produced in the standard fashion, the consistency version of this problem receives as input a sequence of graphs $G_1$, \ldots, $G_t$ over the same vertex set $V$, together with the sequence of labels $\lambda_1$, \ldots, $\lambda_t$, and the task is to find a subset $S \subset V$ of size at most $k$ such that $G_i - S \in \mathcal{H}$ if and only if $\lambda_i = 1$. Let us call the respective learning problem \textsc{Learning $\mathcal{H}$-Deletion Set}.

In particular, when the forbidden family consists of a single graph $K_2$, $\mathcal{H}$ is a class of empty graphs, and \textsc{$\mathcal{H}$-Vertex Deletion} is equivalent to \textsc{Vertex Cover}. If the family is $P_3$, the problem becomes \textsc{Cluster Vertex Deletion} --- find a subset of vertices to delete so that the graph turns into a cluster graph, i.e., a disjoint union of cliques.  \textsc{Learning $\mathcal{H}$-Deletion Set} thus generalizes both \textsc{Learning Vertex Cover} and \textsc{Learning Cluster Deletion Set}, where the task is to learn a hidden vertex cover and deletion set to a cluster graph, respectively. 

Let us tie this explicitly to the formal framework of parameterized learning developed above.
In this context, we interpret an element $x \in \{0,1\}^{n}$ as the adjacency matrix of a graph on $N$ vertices, where $n = N^2$.
A concept $c: \{0,1\}^n \rightarrow \{0,1\}$ is represented by a subset $S$ of the $N$ vertices.
The value $c(x)$ indicates whether or not $S$ is an $\mathcal{H}$-Deletion set for the graph with adjacency matrix $x$. 
As noted, the representation scheme of $c$ takes the subset $S$ to $c$ as just described.
Hence, $\mathcal{C}_n$ corresponds to all $\mathcal{H}$-Deletion sets on $N$-vertex graphs.
We parameterize by taking $\mathcal{R}_k$ to be the set of all vertex subsets of size at most $k$,
and let the distributions carry the trivial constant parameterization $\lambda(\mathcal{D}) = 1$.

Note that, for graph problems, the length of hypotheses is always naturally bounded linearly in $n$,
so we will omit explicit references to $f$ for consistency checking.

Next, we show that the well-known $\fpt$-time algorithm~\cite{CyganFKLMPPS15} for \textsc{$\mathcal{H}$-Vertex Deletion} extends to the respective consistency problem, implying that
the wide class of problems characterized by finite family of forbidden induced subgraphs is $\fpt$-time learnable.

\begin{lem} \label{lem:del-consistency}
Let $\mathcal{H}$ be a class of graphs forbidding induced $H_1$, \ldots, $H_p$.    
Then, the parameterized consistency checking problem associated with $\textsc{Learning $\mathcal{H}$-Deletion Set}$ is fixed-parameter tractable.
\end{lem}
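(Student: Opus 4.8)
The plan is to adapt the classical bounded-depth branching algorithm for \textsc{$\mathcal{H}$-Vertex Deletion} to the consistency setting, treating the positively- and negatively-labeled graphs separately but committing to a single candidate deletion set. Let $d = \max_{i\in[p]} |V(H_i)|$; since the forbidden family is fixed, $d$ is a constant, and testing whether an $N$-vertex graph lies in $\mathcal{H}$ reduces to checking, for each of the $\bigoh(N^d)$ vertex subsets of size at most $d$, whether it induces a copy of some $H_i$, which takes polynomial time.

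First I would branch only on the positive constraints. Starting from $S = \emptyset$ and budget $k$, as long as some positively-labeled graph $G_i - S$ still contains an induced forbidden subgraph on a vertex set $W$ with $|W|\le d$, any valid deletion set must contain a vertex of $W$; I branch on the (at most $d$) choices of which vertex of $W$ to add to $S$, decrementing the budget. This produces a search tree with at most $d^k$ leaves. I discard leaves where the budget is exhausted while some positive graph is still not in $\mathcal{H}$; at every remaining leaf the set $S$ makes \emph{every} positively-labeled graph belong to $\mathcal{H}$.

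The crux is to satisfy the negative constraints, which have no analogue in the single-graph problem. Here I would exploit two monotonicity observations that follow from $\mathcal{H}$ being hereditary: the positive constraint ``$G_i - S \in \mathcal{H}$'' is preserved under \emph{enlarging} $S$, whereas the negative constraint ``$G_j - S \notin \mathcal{H}$'' is preserved under \emph{shrinking} $S$. Consequently, if any deletion set $S^\ast$ of size at most $k$ satisfies all constraints, then following at each branching node the branch that adds a vertex of $W \cap S^\ast$ (which is nonempty, since $S^\ast$ destroys $W$) reaches a leaf with a set $S_{\mathrm{leaf}} \subseteq S^\ast$; this $S_{\mathrm{leaf}}$ satisfies all positive constraints by construction and all negative constraints by downward monotonicity. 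Hence it suffices, at each of the at most $d^k$ surviving leaves, to test the resulting $S$ against the negative constraints --- a polynomial-time check per negatively-labeled graph --- and to return the first $S$ that passes, or report infeasibility if none does.

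Putting this together yields a running time of the form $d^k \cdot \bigoh(t \cdot N^d \cdot \mathrm{poly})$, which is fixed-parameter tractable in $k$ since $d$ is a constant and $N^d$ is polynomial in the input size $n = N^2$. The main obstacle I anticipate is not the branching itself but justifying that inspecting only the leaves of the positive-branching tree suffices to decide the negative constraints; the monotonicity argument above is exactly what closes this gap, guaranteeing that some minimal positive solution --- which the branching is certain to reach --- is simultaneously optimal for the negative side.
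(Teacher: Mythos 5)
Your proposal is correct and matches the paper's own proof in all essentials: both run a bounded search tree that branches (with factor at most $\max_i|V(H_i)|$) on forbidden induced copies found in the positively-labeled graphs, and both justify checking the negative constraints only at the leaves via the hereditary monotonicity of $\mathcal{H}$ --- the paper states this as ``it suffices to consider inclusion-wise minimal sets satisfying the positive constraints,'' which is exactly your observation that the leaf set reached inside a valid solution $S^\ast$ inherits the negative constraints from $S^\ast$. The resulting running-time bounds coincide as well, so there is nothing to fix.
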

\iflong \begin{proof}
Let $(x_1,a_1),\ldots,(x_t,a_t)$ and $k$ be given, and let $G_1,\ldots,G_t$ be the corresponding graphs with $N$ vertices each.
Denote with $\mathcal Y$ be the set of $G_i$ such that $a_i = 1$, and with $\mathcal N$ those where $a_i = 0$.
We need to find a subset $S$ of at most $k$ vertices such that $G - S \in \mathcal{H}$ for all $G \in \mathcal Y$,
and $G - S \notin \mathcal{H}$ for all $G \in\mathcal N$.
First, note that since $\mathcal{H}$ is defined through forbidden induced subgraphs, replacing $S$ with subset of $S$ that still satisfies the first condition will always also still satisfy the second condition. 
It therefore suffices to look only for inclusion-wise minimal subsets that satisfy $G - S \in \mathcal{H}$ for all $G \in \mathcal{Y}$, and then check if $G - S \notin \mathcal{H}$ does indeed hold true for all $G \in \mathcal{N}$.

In order to produce such a minimal set $S$ of size at most $k$, we will in fact enumerate all of them in fixed-parameter time,
by extending the usual bounded search-tree based approach for enumerating minimal vertex covers.
Starting with an empty solution $S = \emptyset$, we iterate over the graphs in $\mathcal{Y}$.

By brute force enumeration, we check if there is an induced copy of any of the $H_i$ in any of them.
Recall that there are a constant number $p$ of graphs $H_i$ of constant size each, say, $q = \max_i |V(H_i)|$,
so that this can be done in time $O(p N^q)$, which is polynomial in $N$. 

If this procedure uncovers an induced copy of $H_i$, then any $\mathcal{H}$-deletion set must remove one of the at most $q$ vertices that the induced copy of $H_i$ occupies in $G$ (if $|S| = k$ at this point, then we discard $S$).
Our enumeration algorithm branches on which of these $q$ choices is made, and adds the respective vertex to $S$ 

If, on the other hand, there was no induced copy of any $H_i$, then $S$ is a minimal $\mathcal{H}$-deletion set for all of the $G \in \mathcal{Y}$, so we check if $G - S \notin \mathcal{H}$ holds true; if so, we output $S$.

As argued, this approach is correct. As for its running time, note that we build up a search-tree of bounded depth $k$ with branching factor at most $q$, and spend at most $O(pN^q)$ time at each step of the recursion,
leading to an overall running time of at most $O(q^k p N^q)$, which is in fixed-parameter tractable in $k$.
\end{proof}\fi

\begin{thm}
    \label{thm:induced_subgraphs}
    Let $\mathcal{H}$ be a class of graphs forbidding induced $H_1$, \ldots, $H_p$. \textsc{Learning $\mathcal{H}$-Deletion Set} is $\fpt$-time PAC-learnable parameterized by the size of the deletion set.
\end{thm}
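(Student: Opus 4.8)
The plan is to combine the structural reduction framework established earlier with the fixed-parameter tractability of consistency checking proved in Lemma~\ref{lem:del-consistency}. Specifically, I would invoke Theorem~\ref{thm:red-learn-to-dec}, which transfers an $\FPT$ algorithm for the consistency checking problem into membership of the learning problem in $\fptpactime$. The two hypotheses of that theorem are that $\consis{\log|\mathcal{H}_{n,k}|}{\mathcal{L}}$ is in $\FPT$ and that $\log|\mathcal{H}_{n,k}| \in \fpt(n;k)$. So the proof reduces to verifying these two conditions for \textsc{Learning $\mathcal{H}$-Deletion Set}.

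First I would handle the bound on the hypothesis space. Recall from the formalization preceding the lemma that $\mathcal{R}_k$ consists of all vertex subsets of size at most $k$, and that a concept on an $N$-vertex graph (with $n = N^2$) is represented by such a subset. Hence $\mathcal{H}_{n,k}$ is at most the number of size-$\leq k$ subsets of $[N]$, giving $|\mathcal{H}_{n,k}| \leq \sum_{j=0}^{k}\binom{N}{j} \leq (N+1)^k$, so that $\log|\mathcal{H}_{n,k}| \leq k\log(N+1) \in \fpt(n;k)$ since $N = \sqrt{n}$. This is an entirely routine calculation and confirms the second hypothesis; as the excerpt already remarks, for graph problems the length of hypotheses is naturally bounded linearly in $n$, so the representation-size bound $f$ required by the consistency checking definition is unproblematic and I would suppress it.

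Second, the $\FPT$ membership of the consistency checking problem is exactly the content of Lemma~\ref{lem:del-consistency}, which I would cite directly rather than reprove. The only subtlety to check is that the parameter used there (the size $k$ of the deletion set) matches the parameterization $\mathcal{R}_k$ of the learning problem, and that the trivial distribution parameterization $\lambda \equiv 1$ plays no adverse role---it does not, since $\ell$ is constant and $S$-sampleability holds trivially for support-only parameterizations (indeed $\lambda$ here is constant). With both hypotheses of Theorem~\ref{thm:red-learn-to-dec} verified, the conclusion that \textsc{Learning $\mathcal{H}$-Deletion Set} lies in $\fptpactime$, i.e.\ is $\fpt$-time PAC-learnable parameterized by the size of the deletion set, follows immediately.

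The genuine mathematical work has already been discharged inside Lemma~\ref{lem:del-consistency}; consequently this theorem is essentially a packaging result, and I do not anticipate a serious obstacle. The one place demanding a little care is confirming that the abstract quantities $\mathcal{H}_{n,k}$, $\log|\mathcal{H}_{n,k}|$, and the consistency problem $\consis{\log|\mathcal{H}_{n,k}|}{\mathcal{L}}$ instantiate correctly under the graph encoding $n = N^2$, so that Theorem~\ref{thm:red-learn-to-dec} applies verbatim. Once that bookkeeping is in place the theorem is immediate.
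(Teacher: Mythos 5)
Your proposal is correct and follows exactly the paper's own proof, which simply combines Lemma~\ref{lem:del-consistency} with Theorem~\ref{thm:red-learn-to-dec}. The additional bookkeeping you supply---the bound $\log|\mathcal{H}_{n,k}| \leq k\log(N+1) \in \fpt(n;k)$ and the triviality of the distribution parameterization---is exactly what the paper leaves implicit (it notes that hypothesis lengths for graph problems are linearly bounded in $n$), so your write-up is, if anything, slightly more complete.
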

\iflong \begin{proof}
This follows from Lemma \ref{lem:del-consistency} together with Theorem \ref{thm:red-learn-to-dec}.
\end{proof}\fi

In contrast, we now show that another well-studied family of deletion problems does most likely not admit a positive result similar to the above. 
Namely, for a graph $H$, let $\mathcal{H}$ be a class of graphs that are $H$-minor-free. 
In particular, when $H = K_3$, \textsc{$\mathcal{H}$-Vertex Deletion} is equivalent to \textsc{Feedback Vertex Set}. 
This problem is well-known to admit $\fpt$-time algorithms when parameterized by the size of the vertex set to delete~\cite{CyganFKLMPPS15},
however we show that the consistency checking problem associated with $\textsc{Learning Feedback Vertex Set}$ is $\W[2]$-hard.
That is, unless the two parameterized complexity classes $\FPT$ and $\W[2]$ coincide,
the consistency problem is not in $\FPT$. 
The consequence $\FPT = \W[2]$ is a parameterized analogue to $\P = \NP$, and considered highly unlikely.

\begin{lem} \label{lem:fvs-consis-hardness}
    The consistency checking problem for \textsc{Learning Feedback Vertex Set} is $\W[2]$-hard, 
    even with only yes-instances in the input.
    \label{thm:fvs_hardness}
\end{lem}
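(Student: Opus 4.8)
The plan is to exhibit a parameterized reduction from \textsc{$k$-Hitting Set} (the canonical $\W[2]$-complete problem recalled in the preliminaries) to the consistency checking problem for \textsc{Learning Feedback Vertex Set}, keeping the parameter $k$ essentially unchanged and producing only \emph{yes}-labelled samples. Since the consistency problem asks for a single vertex subset $S$ of size at most $k$ that simultaneously makes $G_i - S$ acyclic for every graph $G_i$ labelled $1$, the natural idea is to build one graph $G_i$ for each set $F_i \in \mathcal{F}$ so that ``$S$ breaks all cycles in $G_i$'' forces ``$S$ hits $F_i$''. All graphs share a common vertex set $V$ identified with the universe $U$, which matches the framework's requirement that the samples $G_1,\ldots,G_t$ live over the same $N$ vertices.

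**First**, I would encode each set $F_i = \{u_1,\ldots,u_r\} \subseteq U$ as a graph $G_i$ whose only cycles are forced to pass through the elements of $F_i$. The cleanest gadget is to make $G_i$ a single cycle on exactly the vertices of $F_i$ (say in arbitrary fixed order), with all vertices outside $F_i$ left isolated. Then $G_i$ is acyclic if and only if $S \cap F_i \neq \emptyset$: deleting any one vertex of the cycle destroys it, and deleting only vertices outside $F_i$ leaves the cycle intact. A subtlety is that a single cycle on $r$ vertices can be destroyed by deleting \emph{one} vertex, so a size-$k$ deletion set for all of $G_1,\ldots,G_t$ is exactly a hitting set of size $k$; this gives the equivalence $(U,\mathcal{F}) \in \textsc{$k$-Hitting Set}$ iff the consistency instance has a feasible $S$ of size $\le k$. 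Sets of size $1$ or $2$ need a trivial fix (a single edge or a pair of parallel edges/short cycle), but since \textsc{$k$-Hitting Set} remains $\W[2]$-hard even when all sets have size at least $3$, I would simply assume this or pad each small $F_i$ with auxiliary private vertices to lift it to a genuine cycle of length $\ge 3$.

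**The key verification steps**, carried out in order, are: (i) confirm that the construction runs in $\fpt(n)\cdot n^{\bigoh(1)}$ time and sets $N = |U|$, $t = |\mathcal{F}|$, keeping $k' = k$, so conditions (2)--(3) of a parameterized reduction hold; (ii) prove the forward direction, that a hitting set $H$ of size $\le k$ yields $S = H$ with every $G_i - S$ acyclic, hence every sample correctly labelled $1$; and (iii) prove the reverse direction, that any $S$ with $|S|\le k$ making all $G_i - S$ acyclic must intersect every $F_i$, i.e.\ is a hitting set. Because we only ever want $G_i - S$ to be a forest (label $1$), the reduction produces \emph{only yes-instances}, which is exactly the strengthening claimed in the lemma statement.

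**The main obstacle** I anticipate is not the logical core, which is essentially immediate, but rather ensuring the gadget respects the \emph{minor-free} formulation rather than an induced-subgraph one: here $\mathcal{H}$ is the class of $K_3$-minor-free graphs, i.e.\ forests, so ``$G_i - S \in \mathcal{H}$'' genuinely means ``$G_i-S$ is acyclic,'' and I must make sure the cycle gadgets introduce no unintended long cycles that could be broken by deleting a vertex \emph{outside} $F_i$ (keeping the non-$F_i$ vertices isolated rules this out). A secondary point worth handling carefully is that the consistency problem as defined also fixes the distribution-parameter $\ell = \lambda(\{x_1,\ldots,x_t\})$, but since \textsc{Learning Feedback Vertex Set} uses the trivial constant parameterization $\lambda \equiv 1$, this parameter plays no role and the hardness is driven purely by $k$. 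I would close by noting that $\W[2]$-hardness of the consistency problem, combined with the contrapositive of Theorem~\ref{thm:red-learn-to-dec}, is what will later rule out $\fptpactime$ (and, with the sample-based analogue, $\fptpac$) learnability of hidden feedback vertex sets.
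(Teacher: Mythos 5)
Your proposal is correct and takes essentially the same route as the paper's own proof: both reduce from \textsc{$k$-Hitting Set} by turning each set $F_i$ into a sample graph $G_i$ consisting of a single cycle on the vertices of $F_i$ (all other vertices isolated), labelling every sample $1$, and observing that a simultaneous feedback vertex set of size at most $k$ for all the $G_i$ is precisely a hitting set of size at most $k$. The only difference is that you explicitly address the degenerate case $|F_i| \le 2$ (where no cycle exists), a detail the paper's proof silently glosses over.
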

\iflong \begin{proof}
We reduce from \textsc{$k$-Hitting Set}. Again, the input is a pair $(U,\mathcal{F})$, where $\mathcal{F}$ is a family $\mathcal{F}$ of subsets $\{F_i\}_{i=1}^m$ such that $F_i \subseteq U$ for all $i$, and $U = \{1,\ldots,n\}$. The task is to produce a set $H \subseteq \{1,\ldots,n\}$ of size at most $k$ such that $F_i \cap H \neq \emptyset$ for all $i$, where $k$ is the parameter.
This problem is known to be $\W[2]$-hard \cite{DowneyF99}.

We construct an instance of the consistency checking problem for $\textsc{Feedback Vertex Set}$ as follows:
The instances $(x_i,a_i)$ with $i=1,\ldots,m$ have $a_i = 1$ for all $i$, and every $x_i$ is a graph $G_i$ on vertices $\{1,\ldots,n\}$, which edges that form a cycle on the vertex set $F_i$.
Every subset of vertices on $\{1,\ldots,n\}$ of size $k$ that is a simultaneous feedback vertex set for all $G_i$ must hit each $F_i$ at least once, and is hence a hitting set of size at most $k$.
Conversely, every simultaneous feedback vertex set for all of the $G_i$ must destroy every cycle,
and hence contain at least one element of $F_i$ for all $i$.

Therefore, there is a feedback vertex set of size at most $k$ consistent with all samples if and only if there is a hitting set of size at most $k$ in the original instance, and hence, consistency checking $\textsc{Feedback Vertex Set}$ is $\W[2]$-hard.
\end{proof}\fi

\begin{thm}
Unless $\FPT = \W[2]$, the learning problem $\textsc{Learning Feedback Vertex Set}$ is not in $\fptpac$.
\label{thm:graphnotPAC}
\end{thm}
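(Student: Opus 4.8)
The plan is to argue by contradiction, composing the $\W[2]$-hardness of consistency checking from Lemma~\ref{lem:fvs-consis-hardness} with the learning-to-consistency transfer of Theorem~\ref{thm:red-dec-to-learn}. Concretely, I would assume that \textsc{Learning Feedback Vertex Set} lies in $\fptpac$, use this to force the associated parameterized consistency checking problem into $\FPT$, and then observe that a $\W[2]$-hard problem in $\FPT$ collapses $\W[2]$ to $\FPT$.

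The first step is to check that the preconditions of Theorem~\ref{thm:red-dec-to-learn} are satisfied in the present setting. Recall that for graph problems we fixed the trivial constant parameterization $\lambda(\mathcal{D}) = 1$ of distributions; by the remark following Definition~\ref{def:sampleabe}, any parameterization depending only on the support---in particular a constant one---is linear-time sampleable, so we obtain an $S$-sampleable witness with $S \in \fpt(n,t;\ell)$. Furthermore, as noted just before Lemma~\ref{lem:del-consistency}, a hypothesis is a vertex subset of an $N$-vertex graph (with $n = N^2$), so its encoding length is bounded linearly in $n$; hence the bounding function $f$ may be taken linear in $n$, giving $f \in \fpt(n,t;k,\ell)$. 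With these two routine observations, Theorem~\ref{thm:red-dec-to-learn} applies and yields that the parameterized consistency checking problem associated with \textsc{Learning Feedback Vertex Set} is in $\FPT$.

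The second step is to invoke Lemma~\ref{lem:fvs-consis-hardness}, which asserts that this very consistency checking problem is $\W[2]$-hard (indeed already on yes-instances). A $\W[2]$-hard problem that also lies in $\FPT$ would, by composing the parameterized reduction witnessing hardness with the assumed fixed-parameter algorithm, place \textsc{$k$-Hitting Set}---and therefore all of $\W[2]$---into $\FPT$, so that $\W[2] = \FPT$. This contradicts the standing assumption $\FPT \ne \W[2]$ and completes the argument.

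I do not anticipate a genuine obstacle: the proof is a direct two-line composition of the earlier transfer theorem and the hardness lemma, and the only points requiring care are the verifications that $f$ and the sampleability bound $S$ fall into the fixed-parameter class, both of which are immediate because $\lambda$ is trivial and hypotheses have linear encoding length. All the conceptual effort has been front-loaded into Lemma~\ref{lem:fvs-consis-hardness} (the reduction from \textsc{$k$-Hitting Set} via encoding each set as a cycle) and into Theorem~\ref{thm:red-dec-to-learn}.
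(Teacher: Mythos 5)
Your proposal is correct and matches the paper's own proof exactly: Theorem~\ref{thm:graphnotPAC} is obtained by composing Lemma~\ref{lem:fvs-consis-hardness} with Theorem~\ref{thm:red-dec-to-learn}, precisely as you do. Your additional verification that the trivial parameterization $\lambda$ is sampleable and that hypotheses have length linear in $n$ (so $f$ lies in $\fpt(n,t;k,\ell)$) is a correct and slightly more careful treatment of details the paper leaves implicit.
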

\iflong \begin{proof}
Follows from Lemma \ref{lem:fvs-consis-hardness} and Theorem \ref{thm:red-dec-to-learn}.
\end{proof}\fi

On the other hand, the consistency checking problem for \textsc{Learning Feedback Vertex Set} is trivially in $\XP$. Since $\log |\mathcal{H}_{n,k}|$ is trivially upper-bounded by $\bigoh(n)$, by Theorem~\ref{thm:red-learn-to-dec} we immediately obtain:

\begin{obs}
\label{obs:xpgraphs}
\textsc{Learning Feedback Vertex Set} is in $\xppactime$.
\end{obs}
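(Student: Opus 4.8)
The plan is to invoke the machinery established in Theorem~\ref{thm:red-learn-to-dec}, which converts an upper bound for the parameterized consistency checking problem into an upper bound for the associated parameterized learning problem. Concretely, I would apply the second (i.e., $\XP$) part of that theorem to the learning problem $\textsc{Learning Feedback Vertex Set}$, so the only two things left to verify are the hypotheses of that theorem: that the consistency checking problem $\consis{\log|\mathcal{H}_{n,k}|}{\mathcal{L}}$ lies in $\XP$, and that $\log|\mathcal{H}_{n,k}| \in \xp(n;k)$.

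For the second condition, I would use the graph-problem setup fixed earlier in this section: a representation is simply a vertex subset $S \subseteq V$ with $|S| \le k$, so $\mathcal{H}_{n,k}$ is the set of all such subsets and $|\mathcal{H}_{n,k}| \le \sum_{i=0}^{k}\binom{N}{i} \le N^{k+1}$, where $N$ is the number of vertices. Hence $\log|\mathcal{H}_{n,k}| = \bigoh(k \log N) = \bigoh(n)$, which is trivially in $\xp(n;k)$ (indeed it is already in $\poly(n)$); this is exactly the remark preceding the statement that the hypothesis-space logarithm is bounded by $\bigoh(n)$. For the first condition, I would observe that the consistency checking problem can be solved in $\XP$ time by brute force: enumerate every candidate set $S$ of size at most $k$—there are at most $N^{k+1}$ of them—and for each, check in polynomial time whether $G_i - S \in \mathcal{H}$ holds if and only if $\lambda_i = 1$ across all $t$ input graphs. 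Each such check is polynomial since membership in the $H$-minor-free class $\mathcal{H}$ is testable in polynomial time, and the total running time is $N^{\bigoh(k)} \cdot \poly(n,t)$, which is of the form $p(n,t)^{f(k)}$, i.e., in $\XP$.

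With both hypotheses verified, Theorem~\ref{thm:red-learn-to-dec} immediately yields that $\textsc{Learning Feedback Vertex Set}$ is in $\xppactime$, completing the argument. I do not anticipate any genuine obstacle here: the content is entirely a routine instantiation of the general reduction, and the observation is stated precisely because it contrasts with the $\W[2]$-hardness of Lemma~\ref{lem:fvs-consis-hardness} and Theorem~\ref{thm:graphnotPAC}, underscoring that while fixed-parameter learnability is excluded, the weaker $\XP$-flavored learnability still holds. The one point requiring mild care is simply noting that the brute-force consistency check is correct—that a deletion set of size at most $k$ consistent with all labelled samples can be found by exhaustive enumeration—but this is immediate from the definition of the consistency checking problem and needs no delicate combinatorial insight.
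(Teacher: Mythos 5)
Your proposal is correct and matches the paper's own argument exactly: the paper likewise observes that the consistency checking problem is trivially in $\XP$ (by brute-force enumeration of all vertex subsets of size at most $k$), notes that $\log|\mathcal{H}_{n,k}|$ is bounded by $\bigoh(n)$, and invokes the $\XP$ part of Theorem~\ref{thm:red-learn-to-dec}. Your write-up merely spells out the routine details that the paper leaves implicit.
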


The same argument can be made for learning many other graph structures, such as dominating and independent sets. 

\section{Conclusion}
Over the last two decades, the parameterized complexity paradigm has arguably revolutionized our understanding of computational problems throughout computer science. We firmly believe that applying a more fine-grained, parameterized lens on learning problems can similarly reveal a wealth of research questions and potential breakthroughs that have remained hidden up to now. This article provides researchers with the tools and concepts they need to begin a rigorous exploration of this novel research direction, while also laying the foundations of a bridge that will connect the research communities surrounding learning theory and time complexity. It is perhaps worth noting that these communities have so far remained fairly isolated from each other, with their few interactions occurring at venues dedicated to AI research.

Unlike in typical complexity-theoretic papers, we view the greatest contribution of this article to be the theory-building part, i.e., Section~\ref{sec:fptpac}. Indeed, while the specific algorithms and lower bounds that we use to showcase the theory are non-trivial and interesting in their own right, ensuring that the theoretical foundations fit together, are broad enough to capture potential future parameterizations, but also can support the crucial link to the parameterized consistency checking problem was a highly challenging task.

The introduced parameterized PAC learning framework opens many avenues for future study. For instance, there are interesting and non-trivial examples of parameterized learning problems in the class $\mathsf{PAC}[\mathrm{xp},\mathrm{fpt}]$? 

\section* {Acknowledgements}
The authors acknowledge support from the Austrian Science Foundation (FWF, project Y 1329 START-Programm).
\bibliography{ref}

\end{document}

\todo{prove equivalences}

\section{unused, old stuff}

\subsection*{Learning Problems}
A \emph{learning problem} $\Lambda$ is defined as a triple $\Lambda = (U,\mathcal{C},\mathrm{enc})$
where $U \subseteq \{0,1\}^\ast$ is the \emph{universe}, $\mathcal{C}$ is a set of functions $c: U \rightarrow \{0,1\}$ called \emph{concepts}, and $\mathrm{enc}: \mathcal{C} \rightarrow \{0,1\}^\ast$ is the \emph{encoding} of $\mathcal{C}$, with the following property: $\mathrm{enc}$ is injective, and the mapping $\mathrm{enc}(\mathcal{C}) \times U \rightarrow \{0,1\},~(\eta,x) \mapsto \mathrm{enc}^{-1}(\eta)(x)$ is polynomial-time computable.
We write $|c|$ for $|\mathrm{enc}(c)|$.

For a fixed concept $c \in \mathcal{C}$, 
an \emph{instance of $\Lambda$ corresponding to $c$} is a pair $(X,\lambda)$.
It consists of a sequence $X = (x_1,\ldots,x_t), x_i \in $ of \emph{examples}, 
and the corresponding \emph{labels} $\lambda_i = c(x_i)$.

Let $\mathcal{D}$ be a probability distribution on $U$.
If $\mathcal{D}$ has finite support, we denote with $|\mathcal{D}|$ the maximum length of any $x \in U$ supported by $\mathcal{D}$.
For two concepts, $c,c' \in \mathcal{C}$, we define their \emph{statistical distance} under $\mathcal{D}$ as 
\[
d_\mathcal{D}(c,c') = P_{x \sim \mathcal{D}}(c(x) \neq c'(x))
\]

A \emph{learning algorithm} $\mathcal{A}$ for $\Lambda$ is a randomized algorithm that is given access to a randomized oracle to obtain labeled examples $(x,c^\ast(x))$ corresponding to some fixed $c^\ast \in \mathcal{C}$, where $x$ is drawn from $\mathcal{D}$. Upon input of the error parameters $(\varepsilon,\delta)$, it outputs an encoding $\mathcal{A}^{\mathcal{D},c^\ast}(\epsilon,\delta) \in \mathrm{enc}(\mathcal{C})$ of some concept.

We say that $\mathcal{A}$ is \emph{probably approximately correct (PAC)} if for all distributions $\mathcal{D}$ with finite support, $c^\ast \in \mathcal{C}$ and $\varepsilon,\delta > 0$, the output $\eta = \mathcal{A}^{\mathcal{D},c^\ast}(\varepsilon,\delta)$ satisfies
\[
P(d_\mathcal{D}(\mathrm{enc}^{-1}(\eta), c^\ast) \leq \varepsilon) \geq 1 - \delta,
\]
where the probability it taken over the random choices of $\mathcal{A}$.
If additionally, $\mathcal{A}$ runs in time polynomial in $p(1/\varepsilon,1/\delta, |\mathcal{D}|, |c^\ast|)$,
we say that $\mathcal{A}$ is \emph{efficient}.

\subsection*{From Learning to Decision}
There is a way to associate a standard decision problem, that is, a problem in the class NP,
with a given learning problem $\Lambda = (U,\mathcal{C},\mathrm{enc})$.
Namely, we let $L_q$ be the following NP-search problem $L_q \subseteq \{0,1\}^\ast \times \{0,1\}^\ast$:
\begin{align*}
L_q = \{ & (I,\eta) \mid |\eta| \leq q(|I|), \eta \in \mathrm{enc}(\mathcal{C}), \\ & \text{ $I$ is an instance of $\Lambda$ corresponding to $\mathrm{enc}^{-1}(\eta)$}\}\}.
\end{align*}
Here, $q$ is some function taking naturals to naturals.

\begin{prop}
$L$ is in NP whenever $q$ is polynomial.
\end{prop}
\iflong \begin{proof}
Given $I = (X,\lambda)$, if $\eta$ exists such that $(I,\eta) \in L$, then we can guess $\eta$ in time $q(|I|)$.
We can check that $(X,\lambda)$ is indeed an instance of $\Lambda$ corresponding to $\mathrm{enc}^{-1}(\eta)$ by checking whether $\lambda_i = \mathrm{enc}^{-1}(\eta)(x_i)$ holds for all $i$. 
This can be done in polynomial time by the assumption on $\mathrm{enc}$.
\end{proof}\fi

\begin{prop} \label{prop:pac_to_rp}
If there is an efficient PAC learning algorithm $\mathcal{A}$ for $\Lambda$, then $L_q$ is even solvable in randomized polynomial time for large enough $q$.
\end{prop}
\iflong \begin{proof}
Given an input $I = (X,\lambda)$ with $t$ samples and an error parameter $\delta$ (for $L_q$), we let $\mathcal{D}$ be the uniform distribution on $X$.
Note that $|\mathcal{D}|$ is bounded by $|I|$.
We run $\mathcal{A}(1/2t,\delta)$, implementing the oracle by randomly choosing $x_i$ and $\lambda_i$ from $I$, and obtain an encoding $\eta = \mathcal{A}(X,\lambda)$. 
We then check, in polynomial time, whether $\mathrm{enc}^{-1}(\eta)(x_i) = \lambda_i$ holds for all $i$, and answer yes if this is the case, and no otherwise.
By assumption, the algorithm runs in time polynomial in $|I|$, say $p(|I|)$, and hence $|\eta|$ is also bounded by $p(|I|)$.

If $I$ is a yes-instance, that is, if there exists $c^\ast \in \mathcal{C}$ such that $(X,\lambda)$ is an instance of $c^\ast$, then by assumption, $d_\mathcal{D}(\mathrm{enc}^{-1}(\eta),c^\ast) \leq 1/2t$ with probability at least $1-\delta$.
In this case, $\mathrm{enc}^{-1}(\eta)(x_i) = \lambda_i$ must already hold by choice of $\varepsilon = 1/2t$.
If we assume $q$ large enough that $p \leq q$, then $|\eta| \leq p(|I|) \leq q(|I|)$ implies that $\eta$ is a solution for $L_q$, which we found with probability at least $1-\delta$.

If $I$ is a no-instance, there is no such $c^\ast$, and no output $\eta$ will ever correspond to $I$.
Hence, the algorithm will always answer no correctly.
\end{proof}\fi

\subsection*{From Decision to Learning}
For an NP-search problem $L$,
we can define $L^\dagger$ as 
\[
L^\dagger = \{((x_1,\dots,x_t,\lambda_1,\ldots,\lambda_t),w) \mid \forall i: (x_i,w) \in L \Leftrightarrow \lambda_i = 1\}.
\]
We refer to $L^\dagger$ as the problem of finding \emph{consistent witnesses} for $L$.

This problem clearly is in NP whenever $L$ is in NP. Further, $L \leq_p L^\dagger$ holds.
It is now a natural question to ask if we can turn this around, and associate a learning problem with an NP-search problem $L$. 
A promising candidate to this end is the following learning task $\Lambda$.
The universe is the set of instances of $L$, and we let $c_w: L \rightarrow \{0,1\}, ~c_w(x) = [(x,w) \in L]$, and define $\mathrm{enc}(c_w) = w$.
By the definition of NP, this is a learning problem.

We obtain a converse of Proposition \ref{prop:pac_to_rp}:
\begin{prop}
If $L^\dagger$ is solvable in randomized polynomial time, then there is an efficient PAC learning algorithm for the learning problem associated to $L$.
\end{prop}
\iflong \begin{proof}
This follows from a standard bound on finite hypothesis spaces.
\end{proof}\fi

\section{Parameterization}
Problems can be parameterized either by the properties of the instance, or the properties of the sought solution (or both).
A rough formalization for learning problems looks as follows:
A \emph{parameterized learning problem} is a tuple $\Lambda = (U,\mathcal{C},\mathrm{enc},p)$.
Here, $p: (2^U \times \{0,1\}) \times \mathrm{enc}(\mathcal{C}) \rightarrow \mathbb{N}$ is the parameterization.
Essentially, we endow every set of labeled samples seen during learning and the resulting hypothesis with a parameter value.

Now we say that an algorithm $\mathcal{A}$ is \emph{fpt-PAC} if upon input $\varepsilon,\delta,k$, $\mathcal{A}$ queries a set $S$ of labelled examples to produce an output hypothesis $\eta$.
Additionally, if  $p(S,c^\ast) \leq k$, the algorithm must satisfy:
\begin{itemize}
\item $\eta$ satisfies the PAC-condition with respect to the target concept $c^\ast$,
\item $\eta$ satisfies $p(S,\eta) \leq k$
\item $\mathcal{A}$ uses at most $poly(1/\varepsilon,1/\delta,|c^\ast|,|\mathcal{D}|)$ samples,
\item $\mathcal{A}$ runs in time $f(k) \cdot poly(1/\varepsilon,1/\delta,|c^\ast|,|\mathcal{D}|)$.
\end{itemize}